\definecolor{light-gray}{gray}{0.9}
\newcounter{ALC@tempcntr}
\theoremstyle{plain}
\newtheorem{theorem}{Theorem}
\newtheorem{proposition}{Proposition}
\newtheorem{lemma}{Lemma}
\newcommand{\beq}{\begin{eqnarray}}
\newcommand{\eeq}{\end{eqnarray}}
\newfont{\bbb}{msbm10 scaled 500}
\newfont{\bb}{msbm10 scaled 1100}
\newcommand{\av}{{\bf a}}
\newcommand{\xv}{{\bf x}}
\newcommand{\Ec}{{\cal E}}
\newcommand{\Hc}{{\cal H}}
\newcommand{\Pc}{{\cal P}}
\DeclareMathOperator{\Var}{var}
\newcommand{\remove}[1]{}
\newcommand{\avg}{{\mathbb E}}
\newcommand\reals{{\mathbb R}}
\newcommand\ff{{\mathbb F}}
\theoremstyle{definition}
\theoremstyle{remark}
\newcommand{\latexe}{{\LaTeX\kern.125em2%
                      \lower.5ex\hbox{$\varepsilon$}}}
\chardef\bslash=`\\	
\def\square{\RIfM@\bgroup\else$\bgroup\aftergroup$\fi
\vcenter{\hrule\hbox{\vrule\@height.6em\kern.6em\vrule}
\hrule}\egroup}\makeatother\makeindex
\DeclareMathAlphabet{\mathpzc}{OT1}{pzc}{m}{it}
\newcommand{\soumya}[1]{\textcolor{red}{#1}}
\newcommand{\arya}[1]{\textcolor{blue}{#1}}
\setlist[itemize]{leftmargin=0.1in}
\setlist[enumerate]{leftmargin=0.1in}
\title{High Dimensional Discrete Integration over the Hypergrid}
\author{Raj Kumar Maity}
\author{Arya Mazumdar}
\author{Soumyabrata Pal}
\affil{College of Information and Computer Sciences\\ University of Massachusetts Amherst   Amherst, MA 01003, USA\\ E-mail: rajkmaity@cs.umass.edu,  arya@cs.umass.edu, soumyabratap@umass.edu.}
\begin{document}

\maketitle


\begin{abstract}
Recently Ermon et al. (2013) pioneered a way to practically compute approximations to large scale counting or discrete integration problems by using random hashes. The hashes are used to reduce the counting problem into many separate discrete optimization problems. The optimization problems then can be solved by an NP-oracle such as commercial SAT solvers or integer linear programming (ILP) solvers. In particular, Ermon et al. showed that if the domain of integration is  $\{0,1\}^n$ then it is possible to obtain a solution within  a factor of $16$ of the optimal (a 16-approximation) by this technique. 

In many crucial counting tasks, such as computation of partition function of ferromagnetic Potts model, the domain of integration is naturally $\{0,1,\dots, q-1\}^n, q>2$, the hypergrid. The straightforward extension of Ermon et al.'s method allows a $q^2$-approximation for this problem. For large values of $q$, this is undesirable. In this paper, we show an improved technique to obtain an approximation factor of $4+O(1/q^2)$ to this problem. We are able to achieve this by using an idea of optimization over multiple bins of the hash functions, that can be easily implemented by inequality constraints, or even in unconstrained way. Also the burden on the NP-oracle is not increased by our method (an ILP solver can still be used). 
We provide experimental simulation results to support the theoretical guarantees of our algorithms. 
\end{abstract}


\section{Introduction}
Large scale counting problems, such as computing the permanent of a matrix or computing the partition function of a graphical probabilistic generative model, come up often in variety of inference tasks. These problems can, without loss of any generality, be written as {\em discrete integration}:  the summation of evaluations of a nonnegative function $w:\Omega \to \reals_+ \cup \{0\}$ over all elements of $\Omega$:
\begin{align}\label{eq:main}
S_\Omega(w) \equiv \sum_{\sigma \in \Omega} w(\sigma).
\end{align}
These problems are computationally intractable because of the exponential (and sometime super-exponential) size of $\Omega$. A special case is the set of problems \#P, counting problems associated with the decision problems in NP. For example, one might ask how many variable assignments a given CNF (conjunctive normal form) formula satisfies. The complexity class \#P was defined by Valiant \cite{valiant1979complexity}, in the context of computing the {\em permanent} of a matrix. The permanent of a matrix $A$ is defined as,
\begin{align}\label{eq:perm}
{\rm Perm}(A) \equiv \sum_{\sigma \in S_n} \prod_{i=1}^n A_{i,\sigma(i)},
\end{align}
where $S_n$ is the symmetric group of $n$ elements and $A_{i,j}$ is the $(i,j)$-th element of $A$. Clearly, here $S_n$ is playing the role of $\Omega$, and   $w(\sigma) = \prod_{i=1}^n A_{i,\sigma(i)}$. Therefore computing permanent of a nonnegative matrix is a canonical example of a problem defined by eq.~\eqref{eq:main}. 

Similar counting problems arise when one wants to compute the partition functions of the well-known probabilistic generative models of statistical physics, such as the Ising model, or more generally the Ferromagnetic Potts Model \cite{potts1952some}. Given a graph $G(V,E)$, and a label-space $Q \equiv \{0,1,2,\dots, q-1\}$, the partition function  $Z(G)$ of the Potts model is given by,
\begin{align}\label{eq:part}
 \sum_{\sigma \in Q^{|V|}}& \exp\Big(- \zeta\Big(J\sum_{(u,v) \in E} \delta(\sigma(u),\sigma(v)) 
+ H\sum_{u \in V} \delta(\sigma(u),0)  \Big) \Big),
\end{align}
where $\zeta$, $J$ and $H$ are system-constants (representing the temperature, spin-coupling and external force),   $\delta(x,y)$ is the delta-function that is $1$ if and only if $x=y$ and otherwise $0$, and $\sigma$ represents a label-vector, where $\sigma(u)$ is the label of vertex $u$.

It has been shown that, under the availability of an NP-oracle, every problem in \#P can be approximated within a factor of $(1\pm \epsilon), \epsilon >0$, with high probability via a randomized algorithm \cite{stockmeyer1985approximation}. This result says \#P can be approximated by ${\rm BPP}^{\rm NP}$ and the power of an NP-oracle and randomization is sufficient.  However, depending on the weight function $w(\cdot)$,  eq.~\eqref{eq:main} may not be in \#P. There are related approaches to count the number of
models of propositional formulas based on SAT-solvers, such as \cite{birnbaum1999good,bayardo2000counting,wei2005new,pesant2005counting,chakraborty2014distribution,chakraborty2016algorithmic} among others. 

The standard techniques to evaluate eq.~\eqref{eq:main} include the very influential fast variational methods \cite{wainwright2008graphical}, and Markov-Chain-Monte-Carlo based sampling schemes \cite{jerrum1996markov}. In practice, except for limited number of cases, these approaches are mostly used in a heuristic manner without nonasymptotic qualitative guarantees. 
Recently, Ermon et al. proposed an alternative approach (that they call \texttt{WISH} - Weighted-Integrals-And-Sums-By-Hashing) to solve these counting problems \cite{ermon2013taming,ermon2013optimization} by breaking them into multiple optimization problems. Namely, they  use families of hash functions $h: \Omega \to \tilde{\Omega}, |\tilde{\Omega}| < |\Omega|,$ and use a (possibly NP) oracle that can return the correct solution of the  optimization problem:
$
\max_{\sigma: h(\sigma) = a} w(\sigma),
$
for any $a \in \tilde{\Omega}$.
We call this oracle a MAX-oracle. In particular, when $\Omega=\{0,1\}^n$, and $h(\cdot)$ is a random hash function, assuming the availability of a MAX-oracle, Ermon et al. \cite{ermon2013taming} propose a randomized algorithm that approximates the discrete sum within a factor of  sixteen (a 16-approximation) with high probability.
Ermon et al. use simple linear sketches over $\ff_2$ (the finite field of size 2), i.e., the hash function $h_{A,b}:\ff_2^n \to \ff_2^m, A \in \ff_2^{m \times n}, b \in \ff_2^m$ is defined to be
\begin{align}\label{eq:hash}
h_{A,b}(x) = Ax + b,
\end{align}
where the arithmetic operations are over $\ff_2$. The matrix $A$ and the vector $b$ are randomly and uniformly chosen from the respective sample spaces. The MAX-oracle in this case simply provides solutions to the optimization problem:
$
\max_{\sigma \in \ff_2^n: A\sigma = b} w(\sigma).
$

The constraint space $\{\sigma \in \ff_2^n: A\sigma = b\}$ is nice since it is a coset of the nullspace of $A$, and experimental results showed them to be manageable by  optimization softwares/SAT solvers. In particular it was observed that being Integer Programming constraints, real-world instances are often solved in reasonable time. Since the implementation
of the hash function heavily affects the runtime, it makes sense to keep constraints of the MAX-oracle as an affine space as above. These constraints are also called parity constraints. The idea of using such constraints to show reduction among class of problems appeared in several papers before, including \cite{sipser1983complexity,valiant1986np,gomes2006model,thurley2011approximation,gomes2007counting} among others. 
The key property that the hash functions $\{h_{A,b}\}$ satisfy is that they are pairwise independent. This property can be relaxed somewhat - and in a subsequent paper Ermon et al. show that a hash family would work even if the matrix $A$ is sparse and random, thus effectively reducing the randomness as well as making the problem more tractable empirically \cite{ermon2014low}. Subsequently, Achlioptas and Jiang \cite{achlioptas2015stochastic} have shown another way of achieving similar guarantees. Instead of arriving at the set $\{\sigma \in \ff_2^n: A\sigma = b\}$ as a solution of a system of linear equations (over $\ff_2$), they view the set as the image of a lower-dimensional space. This is akin to the generator matrix view of a linear error-correcting code as opposed to the parity-check matrix view. This viewpoint allows their MAX-oracle  to solve just an unconstrained optimization problem.

\vspace{0.02in}
\noindent{\bf Drawbacks of obvious extensions of \cite{ermon2013taming} to large alphabets}.
 Note that, some crucial counting problems, such as computing the partition function of the Ferromagnetic Potts model of Eq.~\eqref{eq:part}, naturally have $\Omega= \{0,1, \dots, q-1\}^n, q>2$, i.e., a hypergrid. It is worth noting that while there exists polynomial time approximation (FPRAS) for the Ising model ($q=2$),  FPRAS for general Potts model ($q >2$) is significantly more challenging (and likely impossible \cite{goldberg2012approximating}). There are a few possible obvious extensions of Ermon et al. \cite{ermon2013taming} to larger alphabets.
 \begin{itemize}
 \item  (The straightforward extension).  The method  of \cite{ermon2013taming} can be used for $q$-ary in stead of binary. However, the drawback is that it  provides a $q^2$-approximation at best which is  particularly bad if $q$ is large (or growing with $n$). 
 
 \item (Convert $q$-ary to binary).
To use the binary-domain algorithm of \cite{ermon2013taming} for any $\Omega =\{0,1, \dots, q-1\}^n$, we need to use a look-up table to map $q$-ary numbers to binary. In this process the number of variables (and also the number of constraints) increases by a factor of $\log q$. This makes the MAX-oracle significantly slower, especially when $q$ is large. Also, for the permanent problem, where $|\Omega|=\exp(n\log n)$,  this  creates a computational bottleneck. It would be useful to extend the method of   \cite{ermon2013taming} for  $\Omega= \ff_q^n$ without increasing the number of variables.

Furthermore, when $q$ is not a power of $2$, by converting $q$-ary configurations to binary, we introduce exponentially many invalid configurations. To account for these, the MAX-oracle must be adjusted accordingly which is a difficult task. This motivates us to keep the problem in its original domain and not convert the domain to binary. 

\item For the binary setting, it has been noted in \cite[section 5.3]{ermon2013taming} that the approximation ratio can be improved to any  $\alpha>1$ by increasing the number of variables, which extends to this $q$-ary setting. However this also results in an increase in number of variables by a factor of $\log_\alpha (q^2)$ which is undesirable.
\end{itemize}

\vspace{0.02in}
\noindent{\bf Our contributions.}
Our first contribution in this paper is to provide a new and improved algorithm to handle counting problems over nonbinary domains.
For any hypergrid $\Omega = \{0,1, \dots, q-1\}^n, q$ is a power of prime, our algorithm provides a $4(1+\frac{1}{q-1})^2$-approximation, when $q$ is odd, and $4(1+\frac{2}{q-2})^2$-approximation, when $q>2$ is even, to the optimization problem of \eqref{eq:main} assuming availability of the MAX-oracle. 
Our algorithm utilizes an idea of using optimization over multiple bins of the hash function that can be easily implemented via inequality constraints.  The constraint space of the MAX-oracle remains an affine space and still can be represented as a modular integer linear program (ILP). 
Our multi-bin technique can also be used to extend the generator-matrix based algorithm of Achlioptas and Jiang \cite{achlioptas2015stochastic}. As a result, we need the MAX-oracle to only perform unconstrained maximization, as opposed to constrained. This lead to significant speed-up in the system, while resulting in the same approximation guarantees.


Finally, we show the performance of our algorithms to compute the partition function of the ferromagnetic Potts model by running experiments on both synthetic datasets and real-worlds datasets. While in this paper we concentrate on theoretical results, the experiments serve as good `proof of concepts' for applications. We also use our algorithm to compute the Total Variation (TV) distance between two joint probability distributions over a large number of variables. In addition to comparing with the straightforward generalization of Ermon et al.'s method \cite{ermon2013taming}, we also show comparisons with  the popular Markov-Chain-Monte-Carlo (MCMC) method and the belief propagation method for discrete integration. All the experiments exhibit  good performance guarantees.

\bigskip
\noindent{\bf Organization.}
The paper is organized as follows. In Section~\ref{sec:back} we describe the technique by \cite{ermon2013taming} called the \texttt{WISH} algorithm, and then elaborate  our new ideas and  main results. In Section~\ref{sec:wish}, we provide the main technical results that lead to an improved approximation. We provide an algorithm with unconstrained optimization oracle (similar to  \cite{achlioptas2015stochastic}) and its analysis in Section~\ref{sec:mba}.  
The experimental results on computation of partition functions and total variation distance are provided in Section~\ref{sec:exp}. 

While only of auxiliary interest here, we note that it is possible to derandomize the hash families based on parity-constraints to the optimal extent while maintaining the essential properties necessary for their performance. Namely, it can be ensured that the hash family can still be represented as $\{x \mapsto Ax+b\}$ while using information theoretically optimal memory to generate them. We discuss this in Appendix~\ref{sec:der}.

It turns out that, by using our technique and some modifications to the MAX-oracle, it is possible to obtain close-to-$4$-approximation to the problem of computing permanent of nonnegative matrices (assuming existence of NP-oracles). The NP-oracle still is amenable to be implemented in a commercial optimization solver. The idea of optimization over multiple bins is crucial here, since the straightforward generalization of Ermon et al.'s result would have given an approximation factor of $\Omega(n^2)$. 
Since there exists polynomial time randomized approximation scheme ($1\pm \epsilon$-approximation) of permanent of a nonnegative matrix \cite{jerrum2004polynomial}, the point of this exercise is to show that our method extends to find permanent of a matrix (albeit not with the best guarantees). We discuss this in Appendix~\ref{sec:perm}. 


\section{Background and our techniques}\label{sec:back}
In this section we describe the main ideas developed by \cite{ermon2013taming} and provide an overview of the techniques that we use to arrive at our new results.


Let the elements in $\Omega$ be $\sigma_1, \sigma_2, \dots, \sigma_{|\Omega|}$ arranged according to a decreasing order of their weight, i.e., $w(\sigma_1) \ge w(\sigma_2) \ge   \dots \ge w(\sigma_{|\Omega|}).$ Let $\beta_i = w(\sigma_{q^i})$, for $i =0,1, \dots, n'$, where $n'$ is the smallest integer such that $q^{n'} \ge |\Omega|$. When $q^{n'}>|\Omega|$ we set $\beta_{n'} =0$ .


Clearly $\beta_0 \ge \beta_1 \ge \dots \ge \beta_{n'}$. As we have not made any assumption on the values of the weight function, $\beta_i$ and $\beta_{i+1}$ can be far from each other.
On the other hand we can try to bound the sum $S_\Omega(w)$ by bounding the area of the slice between $\beta_i$ and $\beta_{i+1}$. This area is at least $q^i(\beta_i-\beta_{i+1})$ and  at most $q^{i+1}(\beta_i-\beta_{i+1})$. Therefore:
$
\sum_{i=0}^{n'-1} q^{i}(\beta_{i}-\beta_{i+1})+ q^{n'}\beta_{n'}  \le S_\Omega(w) \le \sum_{i=0}^{n'-1} q^{i+1}(\beta_{i}-\beta_{i+1})+ q^{n^{'}}\beta_{n'}$ which implies
\vspace{-0.1in}
\begin{align} \beta_0 +(q-1) \sum_{i=1}^{n'}q^{i-1}\beta_{i}  &\le S_\Omega(w) \le  \beta_0 +(q-1) \sum_{i=1}^{n'}q^{i}\beta_{i}.\label{eq:bound}
\end{align}

Hence $\beta_0 +(q-1) \sum_{i=1}^{n'}q^{i-1}\beta_{i}$ is a $q$-factor approximation of $S_\Omega(w)$ and if we are able to find a $k$-approximation of each value of $\beta_{i}$ we will be able to obtain a $kq$-factor approximation of $S_\Omega(w)$. In \cite{ermon2013taming}, subsequently the main idea is to estimate the coefficients $\{\beta_i, 0\le i \le n'\}$.

Now note that,
$
q^i = |\{\sigma \in \Omega: w(\sigma) \ge \beta_i \}|,
$
for $i =0,1, \dots, n'-1$. This also hold for $i =n'$ unless $q^{n'}> |\Omega|$ in which case $\beta_{n'} =0$.
Suppose, using a random hash function $h:\Omega \to \{0,1, \dots, q^i-1\}$ we compute hashes of all elements in $\Omega$. The pre-image of an entry in $\{0,1, \dots, q^i-1\}$ is called the {\em bin} corresponding to that value, i.e., $\{\sigma \in \Omega: h(\sigma) = x\}$ is the bin corresponding to the value $x\in \{0,1, \dots, q^i-1\}$. In every  bin for the hash function, there is on average   one element $\sigma$ such that $w(\sigma) \ge \beta_i$.  So for a randomly and  arbitrarily chosen bin $x \in  \{0,1, \dots, q^i-1\}$, if $w^\ast = \max_{\sigma: h(\sigma) = x} w(\sigma)$, then $w^\ast$ is a `good' approximation of  $\beta_i$ (this will be made rigorous later).
 Indeed, suppose one performs this random hashing $\ell = O(\log n')$ times and then take the aggregate (in this case the median) value of $w^\ast$s. That is say, $\hat{w^\ast} = {\rm median}(w^\ast_1, \dots, w^\ast_\ell)$.
Then by using the independence of the hash functions, it can be shown that the aggregate is an upper bound on $\beta_i$ with high probability. 
In \cite{ermon2013taming},  $\Omega=\ff_2^n$ and  if the hash family is pairwise independent,  then by using the Chebyshev inequality it was shown that
$\hat{w^\ast} \in [\beta_{i+2}, \beta_{i-2}]$ with high probability. The \texttt{WISH} algorithm proposed by  \cite{ermon2013taming} makes use of the above analysis and provides a $2^{2\cdot2}=16$-approximation of $S_w(\Omega)$.  If we naively extend this algorithm for $S_w(\Omega)=\ff_q^n, q>2,$ then 
 it can be shown that
$\hat{w^\ast} \in [\beta_{i+1}, \beta_{i-1}]$ with high probability. This results in an
 approximation factor of $q^{2\cdot 1}=q^2$.
For example, for a ternary alphabet, $\Omega=\ff_3^n$,  we have a $9$-approximation to $S_w(\Omega).$


Instead of using a straightforward  analysis for the $q$-ary case, in this paper we use a MAX-oracle that can optimize over multiple bins of the hash function. Using this oracle we proposed a modified \texttt{WISH} algorithm and call it \texttt{MB-WISH} (Multi-Bin \texttt{WISH}). Just as in the case of  \cite{ermon2013taming,ermon2014low}, the MAX-oracle constraints can be integer linear programming constraints and commercial softwares such as CPLEX can be used.

The main intuition of using an optimization over multiple bins is that it boosts the probability that the $w^\ast$ we are getting above is close to $\beta_i$. 
To be precise, we redefined $\beta_i \equiv w(\sigma_{(\frac{q}{r})^i })$ or   $(\frac{q}{r})^i = |\{\sigma \in \Omega: w(\sigma) \ge \beta_i \}|,$ for $i =1, 2, \dots, n' \equiv\lceil n \log_{q/r} q\rceil$. If we define $T(u) \equiv |\{\sigma \in \Omega: w(\sigma) \ge u \}| $, then Figure \ref{fig:multi-bin} illustrate the $T(u)$ vs. $u$ curve and locates $\beta_i$s therein. Note that, we would like to find the area under the $T(u)$ vs. $u$ curve, for which we use the sum of the vertical slices. Now to estimate the new $\beta_i$, we choose a hash function as before, and optimize over $r^i$ bins of the hash function. These steps are made rigorous in Section~\ref{sec:wish}. 
However if we restrict ourselves to the binary alphabet then (as will be clear later) there is no immediate way to represent such multiple bins in a compact way in the MAX-oracle. For the non-binary case, it is possible to represent multiple bins of the hash function as simple inequality constraints.

{\em This idea leads to an improvement in the approximation factor of $S_w(\Omega)$ to $4+\epsilon$, where $\epsilon$ decays to $0$ proportional to $q^{-1}$. Note that we need to choose $q$ to be a power of prime so that $\ff_q$ is a field.}

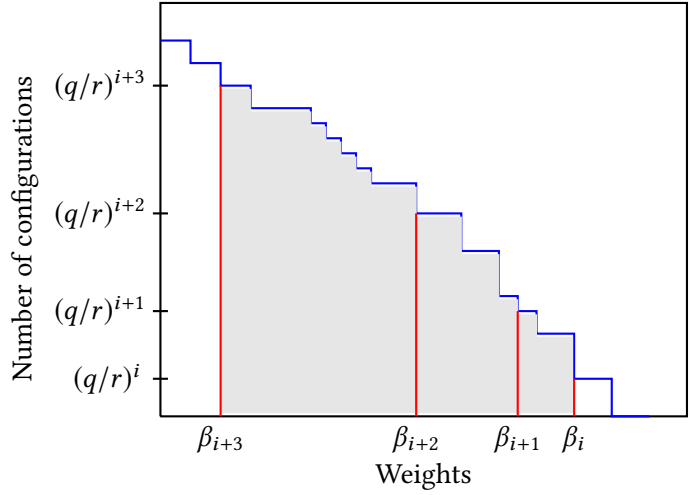
\begin{figure}
\centering
\begin{tikzpicture}

\draw [thick](0,0) -- (7,0) -- (7,5.5) -- (0,5.5) -- (0,0);
\draw [thick,blue] (0,5) -- (.4,5) -- 
    (.4,4.7) -- (.8,4.7) -- 
    (.8,4.4) -- (1.2,4.4) -- 
    (1.2,4.1) -- (2,4.1)  --
    (2, 3.9) -- (2.2, 3.9) --
    (2.2,3.7) -- (2.4,3.7)--
    (2.4,3.5) -- (2.6,3.5) --
    (2.6, 3.3) -- (2.8 , 3.3) --
    (2.8,3.1) -- (3.4,3.1)  --
    (3.4,2.7 ) -- (4, 2.7) --
    (4, 2.2) -- (4.5,2.2) --
    (4.5,1.6) -- (4.75,1.6) --
    (4.75,1.4) -- (5,1.4)--
    (5,1.1) -- (5.5,1.1) --
    (5.5,.5) -- (6,.5) --
    (6,0) -- (6.5,0);
    \filldraw [light-gray] (5,0.05) -- (5.48,0.05) -- (5.48,1.05) -- (5,1.05);
    \filldraw [light-gray] (4.75,0.05) -- (5,0.05) -- (5,1.35) -- (4.75,1.35);
    \filldraw [light-gray] (4.5,0.05) -- (4.75,0.05) -- (4.75,1.55) -- (4.5,1.55);
    \filldraw [light-gray] (4,0.05) -- (4.5,0.05) -- (4.5,2.15) -- (4,2.15);
    \filldraw [light-gray] (3.4,0.05) -- (4,0.05) -- (4,2.65) -- (3.4,2.65);
    \filldraw [light-gray] (2.8,0.05) -- (3.4,0.05) -- (3.4,3.05) -- (2.8,3.05);
    \filldraw [light-gray] (2.6,0.05) -- (2.8,0.05) -- (2.8,3.25) -- (2.6,3.25);
    \filldraw [light-gray] (2.4,0.05) -- (2.6,0.05) -- (2.6,3.45) -- (2.4,3.45);
    \filldraw [light-gray] (2.2,0.05) -- (2.4,0.05) -- (2.4,3.65) -- (2.2,3.65);
    \filldraw [light-gray] (2,0.05) -- (2.2,0.05) -- (2.2,3.85) -- (2,3.85);
    \filldraw [light-gray] (1.2,0.05) -- (2,0.05) -- (2,4.05) -- (1.2,4.05);
   \filldraw [light-gray] (.8,0.05) -- (1.2,0.05) -- (1.2,4.35) -- (.8,4.35);
\draw [thick ,red ] (.8,4.4) -- (.8,0) ;
\node at (-.8,4.4) {$(q/r)^{i+3}$};
\draw [thick] (-.1,4.4) -- (.1,4.4);
\node  at (.8,-.3) {$\beta_{i+3}$};    
\draw [thick, red] (3.4,2.7) -- (3.4,0);
\node at (-.8,2.7) {$(q/r)^{i+2}$};
\draw [thick] (-.1,2.7) -- (.1,2.7);
\node at (3.4,-.3) {$\beta_{i+2}$}; 
\draw [thick, red] (4.75,1.4) -- (4.75,0);
\node at (-.8,1.4) {$(q/r)^{i+1}$};
\draw [thick] (-.1,1.4) -- (.1,1.4);
\node at (4.75,-.3) {$\beta_{i+1}$};
\draw [thick, red] (5.5,.5) -- (5.5,0); 
\node at (-.7,.5) {$(q/r)^{i}$};
\draw [thick] (-.1,.5) -- (.1,.5);
\node at (5.5,-.3) {$\beta_{i}$};
\node at (-1.8,2.5) [ rotate=90] {Number of configurations};
\node at (3.5,-.8) {Weights };
\end{tikzpicture}
\caption{The $T(u)$ vs. $u$ curve and the illustration of $\beta_i$s.} \label{fig:multi-bin}
\end{figure}

In \cite{achlioptas2015stochastic}, the bins (as described above) are produced as images of some function, and not as pre-images of hashes. Since we want the number of bins to  be $q^i$, this can be achieved by looking at images of $g: \ff_q^{n-i} \to \Omega$ where $|\{g(\sigma): \sigma \in \ff_q^{n-i} \}| = q^{n-i}.$ The rest of the analysis of \cite{achlioptas2015stochastic} is almost same as above. The benefit of this approach is that the MAX-oracle just has to solve an unconstrained optimization  here.
Implementing our multi-bin idea for this perspective of \cite{achlioptas2015stochastic} is not straight-forward as we can no longer use inequality constraints for this. However, as we show later, we found a way to combine bins here in a succinct way generalizing the design of $g$. As a result, we get the same approximation guarantee as in \texttt{MB-WISH}, with the oracle load heavily reduced (this algorithm, that we call \texttt{Unconstrained MB-WISH},   can be found in Section~\ref{sec:mba}). 


\section{The \texttt{MB-WISH} algorithm and analysis} \label{sec:wish}

Let us assume $\Omega = \ff_q^n$ where $q$ is a prime-power. Let us also fix an ordering among the elements of $\ff_q\equiv  \{\alpha_0, \alpha_1,\dots , \alpha_{q-1}\}$ and write $\alpha_0 \prec \alpha_1 \prec \dots \prec \alpha_{q-1}$. In this section, the symbol `$\prec$' just signifies a fixed  ordering and has no real meaning over the finite field. Extending this notation, for any two vectors $x, y \in \ff_q^m$, we will say $x \prec y$ if and only if the $i$th coordinates of $x$ and $y$, satisfy $x_i<y_i$ for all $i=1, \dots, m$. Below $\mathbf{1}$ denotes an all-one vector of a dimension that would be clear from context. Also, for any event $\Ec$ let $\mathds{1}[\Ec]$ denote the indicator  for the event $\Ec$.

The MAX-oracle for \texttt{MB-WISH} performs the following optimization, given $A \in \ff_q^{m \times n}; b,s \in \ff_q^m$:
\vspace{-0.04in}
\begin{align}
\max_{\sigma \in \ff_q^n : A\sigma +b \prec s} w(\sigma).
\end{align}

The modified \texttt{WISH} algorithm is presented as Algorithm \ref{algo:mbwish}. The main result of this section is below.
\begin{algorithm}                   
\caption{\texttt{MB-WISH} algorithm for $\Omega = \ff_q^n$ and a weight function $w$ \label{algo:mbwish}}          
\begin{algorithmic}                 
      \REQUIRE $r$, $\gamma= \frac{q}{3r}(\frac{1}{2}-\frac{r}{q})^{2}$, $\ell = \lceil \frac{1}{\gamma}\ln \frac{2n}{\delta} \rceil$, $n'=\lceil n \log_{q/r} q\rceil$
      \STATE $M_0 \equiv \max_{\sigma\in \ff_q^n} w(\sigma)$
       \FOR{$i\in\{1,2,\dots,n'\}$}
         \FOR{$k \in \{1,\dots,\ell\}$}
           \STATE Sample hash functions $h_i \equiv h_{A^{i},b^{i}}$ uniformly at random from $\Hc_{i,n}$ as defined in \eqref{eq:fam}
           \STATE $w_{i}^{(k)}=\max_{\sigma: A^{i}\sigma+b^{i} \prec \alpha_r \cdot \mathbf{1}} w(\sigma) $
        \ENDFOR
         \STATE $M_{i}={\rm Median}(w_{i}^{(1)},w_{i}^{(2)},\dots,w_{i}^{(\ell)})$
       \ENDFOR
       \STATE Return $M_{0}+(\frac{q}{r}-1)\sum_{i=0}^{n'-1}M_{i+1}\big(\frac{q}{r}\big)^{i} $
\end{algorithmic}
\end{algorithm}

\begin{theorem}\label{thm:mb}
Suppose $q>2$ is a prime power, $\Omega = \ff_q^n$ and  a positive integer $r \le \lfloor{\frac{q-1}{2}}\rfloor$. For any  $\delta> 0$, 
Algorithm \ref{algo:mbwish} makes $\Theta(n \log \frac{n}{\delta})$ calls to the MAX-oracle, and with probability  $\ge 1-\delta$ outputs a $(\frac{q}{r})^{2}$-approximation of $S_w(\Omega)$. 
\end{theorem}

By setting $r = \lfloor{\frac{q-1}{2}}\rfloor$, our algorithm provides a $4(1+\frac{1}{q-1})^2$-approximation, when $q$ is odd, and $4(1+\frac{2}{q-2})^2$-approximation, when $q>2$ is even.

The constant in the big-O term in the number of calls to the oracle is a function of $q$ and $r$. In particular, when $r = \lfloor{\frac{q-1}{2}}\rfloor$ and $q$ odd, this constant varies as $q^2\log q$. We can tune the value of $r$ to reduce the number of calls to the oracle at the expense of the approximation factor.

The theorem will be proved by a series of lemmas.
The key trick that we are using is to ask the MAX-oracle to solve an optimization problem over not a single bin, but {\em multiple bins} of the hash function. 
This is going to boost the probability that our estimates of $\beta_i$s are good. 
In particular we will solve the optimization over $r^m$ bins of the hash function. The hash family is defined in the following way. We have $h_{A,b}:\ff^n \to \ff^m: x \mapsto Ax+ b$, the operations are over $\ff_q$.
Let 
\begin{align}\Hc_{m,n} = \{h_{A,b}: A \in \ff_q^{m \times n}, b \in \ff_q^m\}.\label{eq:fam}
\end{align} 
For readers familiar with coding theory, the basis behind our technique is simple. The set of configurations $\{\sigma \in \ff_q^n : A\sigma ={\bf 0}\}$ forms a linear code of dimension $n-m$. The bins of the hash function define the cosets of this linear code. We would like to chose $q^r$ cosets of a random linear code and the find the optimum value of $w$ over the configurations of these cosets as the MAX-oracle. 
To choose a hash function uniformly and randomly from $\Hc$, we can just choose the entries of $A$ and $b$ uniformly at random from $\ff_q$ independently. 


Note that, the hash family $\Hc_{m,n}$ as defined in \eqref{eq:fam} is uniform and pairwise independent.
It  follows from the following more general result.

\begin{lemma}\label{lem:pair}
Let us define  $Z_{\sigma}$ to be the indicator random variable denoting $A\sigma+b \prec \alpha_r\cdot \mathbf{1}$ for some $r\in \{0, \dots, q-1\}$ and $A,b$ randomly and uniformly sampled from $\Hc_{m,n}$. Then $\Pr(Z_{\sigma}=1)=\big(\frac{r}{q}\big)^{m}$ and for any two configurations $\sigma_{1},\sigma_{2}\in \ff_q^n$ the random variables $Z_{\sigma_{1}}$ and $Z_{\sigma_{2}}$ are independent.
\end{lemma}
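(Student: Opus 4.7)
The plan is to verify both claims directly from the definition of the hash family $\Hc_{m,n}$, exploiting the linearity of $h_{A,b}(\sigma) = A\sigma + b$ and the independence of the entries of $A$ and $b$.

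For the first claim, I would observe that since $b$ is uniform on $\ff_q^m$ and independent of $A$, for every fixed $\sigma$ the random vector $A\sigma + b$ is uniformly distributed on $\ff_q^m$ (conditioning on any realization of $A\sigma$ gives a uniform translate). Moreover, because the rows of $A$ are independent and the coordinates of $b$ are independent, the coordinates $(A\sigma+b)_1,\dots,(A\sigma+b)_m$ are mutually independent and each is uniform on $\ff_q$. The event $\{A\sigma+b < \alpha_r\cdot\mathbf{1}\}$ is, coordinate by coordinate, the event that the $i$-th coordinate lies in the fixed set $\{\alpha_0,\dots,\alpha_{r-1}\}$, which has probability $r/q$. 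Multiplying across the $m$ independent coordinates gives $\Pr(Z_\sigma = 1) = (r/q)^m$.

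For the second claim, fix $\sigma_1 \neq \sigma_2$ in $\ff_q^n$, set $Y_1 = A\sigma_1 + b$ and $Y_2 = A\sigma_2 + b$, and consider instead the pair $(Y_1, Y_2 - Y_1)$. Note that $Y_2 - Y_1 = A(\sigma_2-\sigma_1)$ depends only on $A$, while, conditioned on any realization of $A$, $Y_1 = A\sigma_1 + b$ is uniform on $\ff_q^m$ because of the independent uniform $b$. Hence $Y_1$ is independent of $A$, and in particular independent of $Y_2 - Y_1$. Since $\sigma_2 - \sigma_1 \neq 0$, some coordinate of $\sigma_2-\sigma_1$ is a unit in $\ff_q$, so each coordinate $A_i(\sigma_2-\sigma_1)$ is uniform on $\ff_q$, and across independent rows the vector $A(\sigma_2-\sigma_1)$ is uniform on $\ff_q^m$. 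Thus $(Y_1, Y_2 - Y_1)$ is uniform on $\ff_q^m\times\ff_q^m$, which means $(Y_1, Y_2)$ is uniform on $\ff_q^m\times\ff_q^m$ as well. From this joint uniformity, $Z_{\sigma_1}$ and $Z_{\sigma_2}$ factor: $\Pr(Z_{\sigma_1}=1, Z_{\sigma_2}=1) = (r/q)^{2m} = \Pr(Z_{\sigma_1}=1)\Pr(Z_{\sigma_2}=1)$, and similarly for the other three combinations of outcomes.

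There is no real obstacle here; the only subtlety is being careful that $\ff_q$ is a field (so $\sigma_2-\sigma_1 \neq 0$ guarantees a unit coordinate, making $A_i(\sigma_2-\sigma_1)$ uniform), and that the randomization in $b$ is what decouples $Y_1$ from $A$. Both ingredients were already built into the construction of $\Hc_{m,n}$, so the two claims follow by straightforward coordinate-wise calculations.
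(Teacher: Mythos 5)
Your proof is correct and follows essentially the same route as the paper's: the marginal probability comes from the uniformity of $A\sigma+b$ induced by the independent uniform $b$, and pairwise independence comes from the uniformity of $A(\sigma_2-\sigma_1)$ when $\sigma_1\neq\sigma_2$ (a nonzero, hence unit, coordinate over the field $\ff_q$). Your packaging of the second step as joint uniformity of the pair $(Y_1,Y_2-Y_1)$ is a slightly tidier presentation of the paper's row-by-row double sum over values, but it is the same underlying calculation.
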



\begin{proof} 
Let $A_i$ denote the $i$th row of $i$ and $b_i$ denote the $i$th entry of $b$. Then 
$\mathds{1}[A\sigma+b \prec \alpha_r\cdot \mathbf{1}]=\bigwedge_{i=1}^{m}\mathds{1}[A_{i}\sigma+b_i \prec \alpha_r]$. 

For all configurations $\sigma \in \Omega, \forall i$, we must have
\begin{align*}
\Pr(A_i\sigma +b_i \prec \alpha_r) =\sum_{j=0}^{r-1} \Pr(A_i\sigma +b_i =\alpha_j)=\frac{r}{q}.
\end{align*}

As  $A_i,b_i$  are independent $1\le i \le m$, we must have that $\Pr(A\sigma +b \prec \alpha_r\cdot \mathbf{1})=(\frac{r}{q})^m$. Now for any two configurations $\sigma_1 ,\sigma_2\in \ff_q^n$,
\begin{align*}
& \Pr(A_i\sigma_1+b_i \prec \alpha_r \wedge A_i\sigma_2+b_i \prec \alpha_r ) \\ 
&=\sum_{k=0}^{r-1}\sum_{j=0}^{r-1}\Pr(A_i\sigma_1+b_i =\alpha_k \wedge A_i\sigma_2+b_i = \alpha_j )\\ 
&=\sum_{k=0}^{r-1}\sum_{j=0}^{r-1}\Pr(A_i\sigma_1+b_i =\alpha_k)\\
& \qquad \cdot\Pr(A_i\sigma_2+b_i = \alpha_j|A_i\sigma_1+b_i =\alpha_k)\\
&=\sum_{k=0}^{r-1}\sum_{j=0}^{r-1}\Pr(A_i\sigma_1+b_i =\alpha_k)\Pr(A_i(\sigma_2-\sigma_1) = \alpha_j-\alpha_k)\\
&=r^2(1/q)(1/q)=(r/q)^2.
\end{align*}
As all the rows are independent, $\Pr(A\sigma_1 +b \prec \alpha_r\cdot \mathds{1} \wedge A\sigma_2 +b \prec \alpha_r\cdot \mathds{1})=(\frac{r}{q})^{2m}.$
\end{proof}

Fix an ordering of the configurations $(\sigma_i,1\leq i \leq q^n)$ such that $1\leq j \leq q^n, w(\sigma_j)\geq w(\sigma_{j+1})$. We can also interpolate the space of configuration to make it continuous by the following technique. For any positive real number $x= z+f$, where $z =\lfloor x\rfloor $ is the integer part and $f = x-z$ is the fractional part, define $w(\sigma_x) = w(\sigma_z)$.  For $i \in \{0,1,2,\ldots, n'\equiv \lceil n \log_{q/r} q\rceil\}$, define $\beta_i=w(\sigma_{t^i}) = w(\sigma_{\lfloor t^i\rfloor})$, where $t =\frac{q}{r}$. We take $w(\sigma_k) =0$ for $k > q^n$. See Figure~\ref{fig:multi-bin} for an illustration.

To prove Theorem \ref{thm:mb} we need the following crucial lemma as well.
\begin{lemma}\label{lem:cru}
Let $M_{i}={\rm Median}(w_{i}^{(1)},\dots,w_{i}^{(\ell)})$ be defined as in the Algorithm \ref{algo:mbwish}. Then for $\gamma = \frac{q}{3r}(\frac{1}{2}-\frac{r}{q})^{2}$, we have, 
$\Pr \bigg(M_i \in [\beta_{\min(i+1,n')},\beta_{\max(i-1,0)}] \bigg) \ge 1-2\exp(-\gamma \ell).$
\end{lemma}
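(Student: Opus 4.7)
The strategy is to prove two one-sided single-trial bounds, $\Pr[w_i^{(k)} \geq \beta_{i+1}] \geq 1 - r/q$ via Chebyshev and $\Pr[w_i^{(k)} \leq \beta_{i-1}] \geq 1 - r/q$ via Markov, then amplify to the median over the $\ell$ independent trials. Fix $i$ and set $S_i := \{\sigma \in \ff_q^n : A^i\sigma + b^i < \alpha_r\mathbf{1}\}$ with indicators $Z_\sigma := \mathds{1}[\sigma \in S_i]$. Lemma~\ref{lem:pair} (with $m = i$) gives $\mathbb{E}[Z_\sigma] = (r/q)^i$ and pairwise independence of $\{Z_\sigma\}_{\sigma \in \ff_q^n}$. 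Writing $T_j := \{\sigma : w(\sigma) \geq \beta_j\}$ so that $|T_j| = \lfloor t^j\rfloor$ with $t = q/r$, the event $\{w_i^{(k)} \geq \beta_{i+1}\}$ coincides with $\{U \geq 1\}$ for $U := \sum_{\sigma \in T_{i+1}} Z_\sigma$; pairwise independence yields $\Var(U) \leq \mathbb{E}[U] = \lfloor t^{i+1}\rfloor(r/q)^i \approx t$, so Chebyshev produces $\Pr[U = 0] \leq 1/\mathbb{E}[U] \leq r/q$. On the opposite side, $V := \sum_{\sigma : w(\sigma) > \beta_{i-1}} Z_\sigma$ has $\mathbb{E}[V] < t^{i-1}(r/q)^i = r/q$, so Markov immediately gives $\Pr[w_i^{(k)} > \beta_{i-1}] = \Pr[V \geq 1] < r/q$.

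Since $r = \lfloor (q-1)/2 \rfloor$ forces $r/q < 1/2$, each single-trial probability exceeds $\tfrac{1}{2} + \epsilon$ with $\epsilon := \tfrac{1}{2} - r/q > 0$. The event $\{M_i < \beta_{i+1}\}$ is contained in $\{\sum_k X_k < \ell/2\}$ for $X_k := \mathds{1}[w_i^{(k)} \geq \beta_{i+1}]$; as the $\ell$ hashes are drawn independently, Hoeffding's inequality bounds this probability by $\exp(-2\epsilon^2 \ell) \leq \exp(-\gamma \ell)$, using $\gamma = \tfrac{q}{3r}\epsilon^2 \leq 2\epsilon^2$ (which holds whenever $r \geq q/6$, easily verified for $r = \lfloor(q-1)/2\rfloor$ and $q \geq 3$). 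A symmetric argument controls $\{M_i > \beta_{i-1}\}$, and a union bound yields the claimed $2\exp(-\gamma\ell)$. The degenerate boundary cases are absorbed by the $\min/\max$ in the statement: $M_0 = \beta_0$ is set exactly in Algorithm~\ref{algo:mbwish}, and $\beta_{n'} = 0$ since $t^{n'} \geq q^n$ exhausts $\Omega$, rendering the lower endpoint at $i = n'$ vacuous.

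The main technical nuisance I anticipate is the floor correction in $|T_j| = \lfloor t^j \rfloor$: when $t$ is non-integer, $\mathbb{E}[U]$ can dip slightly below $t$, so the naive Chebyshev quotient $1/\mathbb{E}[U]$ marginally exceeds $r/q$. I would absorb this either by invoking the continuous interpolation $w(\sigma_x) = w(\sigma_{\lfloor x\rfloor})$ introduced just before the lemma (so $|T_j|$ is effectively treated as a real number) or by sacrificing a lower-order constant out of $\epsilon$. Conceptually, the crux—and why this beats Ermon et al.'s $c \geq 2$ regime—is that enlarging the preimage from a single coset ($r = 1$) to $r^i$ cosets multiplies the expected intersection by $r$, strictly pushing both tail probabilities below $\tfrac{1}{2}$ and enabling the $c = 1$ amplification that powers the advertised $(q/r)^2$-approximation.
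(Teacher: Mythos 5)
Your proposal is correct and follows essentially the same route as the paper: Chebyshev on the pairwise-independent indicator sum for the lower tail, Markov for the upper tail, each giving a per-trial failure probability at most $r/q<1/2$, and then concentration of the median (the paper uses the multiplicative Chernoff bound to land exactly on $\gamma=\frac{q}{3r}(\frac12-\frac{r}{q})^2$, while your Hoeffding bound $\exp(-2\epsilon^2\ell)\le\exp(-\gamma\ell)$, valid since $\frac{q}{3r}\le 2$ for $r=\lfloor\frac{q-1}{2}\rfloor$, works equally well). The floor nuisance you flag is handled in the paper by precisely the tightening available in your own computation: keeping the factor $1-(r/q)^i$ in the variance gives $\Pr[U=0]\le\frac{t^i-1}{\lfloor t^{i+1}\rfloor}<\frac{t^i-1}{t^{i+1}-1}\le\frac{1}{t}=\frac{r}{q}$, so no constant needs to be sacrificed.
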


\begin{proof} 
 Consider the set of $\lfloor t^j\rfloor$ heaviest configuration 
\begin{align*}
\Omega_j=\{\sigma_1,\ldots \sigma_{\lfloor t^j \rfloor} \} .
\end{align*}
Let $S_j(h_i) = |\{\sigma \in \Omega_j: A^i\sigma+b^i \prec \alpha_r\cdot\mathbf{1}\}|.$ Recall $h_i$ is sampled uniformly at random from $\Hc_{i,n}.$
By the uniformity property of the hash function, 
\begin{align*}
\avg S_j(h_i)=\avg \sum_{\sigma \in \Omega_j}\mathds{1}[h_i(\sigma) \prec \alpha_r\cdot \mathbf{1}]=\frac{\lfloor t^j\rfloor}{t^i}.
\end{align*}
For each configuration $\sigma$ let us denote the random variable $\bar{Z^i_{\sigma}}=\mathds{1}[\{h_i(\sigma) \prec \alpha_r\cdot \mathbf{1} \}]-\frac{1}{t^i}$. By our design $\avg \bar{Z^i_{\sigma}}=0.$ Note that, $S_j(h_i) - \avg S_j(h_i) = \sum_{\sigma\in \Omega_j}\bar{Z^i_{\sigma}}$.
Also, from Lemma \ref{lem:pair}, the random variables $\bar{Z^i_{\sigma}}$s are pairwise independent.
Therefore,
\begin{align*}
\Var S_j(h_i) =\Var(\sum_{\sigma\in \Omega_j}\bar{Z^i_{\sigma}})=\sum_{\sigma\in \Omega_j}\avg \bar{Z^i_{\sigma}}^2=\frac{\lfloor t^j\rfloor}{t^i}(1-\frac{1}{t^i}).
\end{align*}
Now, %
for any $1 \le k \le \ell$,
\begin{align*}
\Pr(w_i^{(k)}\geq \beta_j)=\Pr(w_i^{(k)} \geq w(\sigma_{\lfloor t^j\rfloor}))= \Pr(S_j(h_i)\ge1)
 = 1- \Pr(S_j(h_i)\le 0).
\end{align*}
Let $j =i+1$. Then, using Chebyshev inequality,
\begin{align*}
\Pr(S_j(h_i)\le 0) = \Pr\Big(S_j(h_i) - \avg S_j(h_i) \le - \frac{\lfloor t^j\rfloor}{t^i}\Big)  
\le  \frac{\Var S_j(h_i)}{(\frac{\lfloor t^j\rfloor}{t^i})^2} \le \frac{t^{i}(1-1/t^i)}{\lfloor t^{j}\rfloor} < \frac{t^i-1}{t^{i+1}-1} \le \frac{1}{t} = \frac{r}{q}.
\end{align*}
Therefore, 
\begin{align*}
\Pr(w_i^{(k)}\geq \beta_{i+1}) \ge 1 - \frac{r}{q}.
\end{align*}

Also,
$
\Pr(w_i^{(k)}\leq \beta_{i-1}) = \Pr(w_i^{(k)} \leq w(\sigma_{\lfloor t^{i-1}\rfloor})) \ge \Pr(S_{i-1}(h_i) =0).
$
Notice that the last inequality is satisfied because $S_{i-1}(h_i) =0$ implies  $w_i^{(k)} \leq w(\sigma_{\lfloor t^{i-1}\rfloor})$.
Now, continuing the chain of inequalities, using Markov inequality, 
\begin{align*}
\Pr(w_i^{(k)}\leq \beta_{i-1})  \ge 1 - \Pr(S_{i-1}(h_i) \ge 1) \ge 1 - \avg S_{i-1}(h_i) 
= 1-\frac{\lfloor t^{i-1}\rfloor}{t^i}\ge 1-\frac1t = 1-\frac{r}{q}.
\end{align*}

Recall that, $M_i = {\rm Median}(w_i^{(1)}, w_i^{(2)}, \dots, w_i^{(\ell)})$. Define, $X_i^{(k)}$ to be the indicator random variable of the event $\{w_i^{(k)} \le \beta_{i+1}\}$. Therefore, $\Pr(M_i \le \beta_{i+1})= \Pr(\sum_{k=1}^\ell X_i^{(k)} \ge \ell/2).$ On the other hand, note that, $\Pr(X_i^{(k)} =1) \le r/q$. We know from Chernoff bound that, if $X$ is a sum of iid $\{0,1\}$ random variables then $\Pr(X \ge \avg X(1+\delta)) \le \exp(-\avg X \delta^2/3)$.
Therefore,
\begin{align*}
\Pr(M_i \le \beta_{i+1})\le \exp\Big(-\frac{\ell q}{3r}\Big(\frac12-\frac{r}{q}\Big)^2 \Big).
\end{align*}
Similarly, 
\begin{align*}
\Pr(M_i \ge \beta_{i-1})\le \exp\Big(-\frac{\ell q}{3r}\Big(\frac12-\frac{r}{q}\Big)^2 \Big).
\end{align*}
This proves the lemma.
\end{proof}

From Lemma \ref{lem:cru}, the output of the algorithm  lies in the range $[L',U']$ with probability at least $1-\delta$ where 
$
L'=\beta_0+(t-1)\sum_{i=0}^{n'-1}\beta_{\min\{i+2,n'\}}t^i$ and $U'=\beta_0+(t-1)\sum_{i=0}^{n'-1}\beta_it^i$. $L'$ and $U'$ are a factor of $t^2$ apart.
Now, following an argument similar to \eqref{eq:bound}, we can show $L' \le S_w(\Omega) \le U'.$

%


Therefore Algorithm \ref{algo:mbwish} provides a $t^2$-approximation to $S_\Omega(w)$. 
Let us now give the full proof of  Theorem  \ref{thm:mb}.
\begin{proof}[Proof of Theorem \ref{thm:mb}]
From Lemma \ref{lem:cru}, we have, 
$\Pr \bigg[ \bigcap_{i=1}^{n} M_i \in [\beta_{\min(i+1,n')},\beta_{\max(i-1,0)}] \bigg] \ge 1-2n\exp(-\gamma \ell) = 1-\delta$
 for $\ell = \frac{1}{\gamma}\ln\frac{2n}{\delta}$ and by definition $M_{0}=\beta_{0}.$

%
The algorithm outputs $M_0+(t-1)\sum_{i=0}^{n'-1}M_{i+1}t^i$ which lies in the range $[L',U']$ with probability at least $1-\delta$ where 
\begin{align*}
L'=\beta_0+(t-1)\sum_{i=0}^{n'-1}\beta_{\min\{i+2,n'\}}t^i  \quad \text{ and } \quad  U'=\beta_0+(t-1)\sum_{i=0}^{n'-1}\beta_it^i.\end{align*}

Now notice that, as $ \beta_0 \ge \beta_1$, we have 
\begin{align*}
U' &=\beta_0+(t-1)\sum_{i=0}^{n'-1}\beta_it^i  
\\& = \beta_0 + (t-1) (\beta_0 +\beta_1t)+(t-1)\sum_{i=2}^{n'-1}\beta_it^i \\
& \leq t^2\beta_0+ t^2.(t-1). \sum_{i=2}^{n'-1}\beta_it^{i-2}\\
&\leq t^2(\beta_0+(t-1)\sum_{i=0}^{n'-1}\beta_{\min\{i+2,n'\}}t^i)=t^2L'.
\end{align*}

The only thing that remains to be proved is that $L' \le S_w(\Omega) \le U'.$ However that is true, by just following an argument similar to \eqref{eq:bound}. Indeed,
\begin{align*}
\sum_{i=0}^{n'-1}\beta_{i+1}(t^{i+1} -  t^{i}) \le 
S_\Omega(w) \le\sum_{i=0}^{n'-1}\beta_{i}( t^{i+1} - t^{i} )\end{align*}
which implies $L' \le S_\Omega(w) \le U'.$

Therefore Algorithm \ref{algo:mbwish} provides a $t^2$-approximation to $S_\Omega(w)$. The total number of calls to the MAX-oracle is
$n'\ell +1 = O(n\log(n/\delta))$.
\end{proof}

To exemplify this result, suppose $q=3$. In this case the algorithm provides a $9$-approximation. Later, in the experimental section, we have used a ferromagnetic Potts model with $q=5$.  \texttt{MB-WISH} provides a $\frac{25}{4}=6.25$-approximation in that case. Note that, for a $5$-ary Potts model, it is only natural to use our algorithm instead of converting it to binary in conjunction with the original algorithm of Ermon et al.

Instead of pairwise independent hash families, if we employ $k$-wise independent families, it leads to a better decay probability of error. However it does not improve the approximation factor. 

\vspace{0.02in}
\noindent\textbf{Unconstrained optimization oracle.} We can modify and generalize the results of Achlioptas and Jiang \cite{achlioptas2015stochastic} to formulate a version of \texttt{MB-WISH} that can use unconstrained optimizers as the MAX-oracle. The MAX-oracle for this algorithm performs an unconstrained optimization of the  form:  $
\max_{\sigma \in B}w(A\sigma +b)$, given $A\in \ff_q^{m\times n}, b \in \ff_q^n$ and a set $B \subseteq \ff_q^m$.

The aim is to carefully design $B$ so that all the desirable statistical properties are satisfied.
This part is quite different from the hashing-based analysis  and not an immediate extension of \cite{achlioptas2015stochastic}. We provide the algorithm (\texttt{Unconstrained MB-WISH}) and its analysis in the next section. 

\section{\texttt{MB-WISH} with unconstrained optimization oracle}
\label{sec:mba}
In this section, we provide an algorithm that uses unconstrained optimizations for the oracle, as in the case of Achlioptas and Jiang \cite{achlioptas2015stochastic}. We call this algorithm \texttt{Unconstrained MB-WISH}.

Let us assume $\Omega = \ff_q^n$ where $q$ is a prime-power. As before,  let us also fix an ordering among the elements of $\ff_q\equiv  \{\alpha_0, \alpha_1,\dots , \alpha_{q-1}\}$ and write $\alpha_0\prec \alpha_1 \prec \dots \prec\alpha_{q-1}$. Recall that, here the symbol `$\prec$' just signifies a fixed  ordering and has no real meaning over the finite field. 

The MAX-oracle for \texttt{Unconstrained MB-WISH} performs an unconstrained optimization of the following form, given $A\in \ff_q^{m\times n}, b \in \ff_q^n$ and a set $B \subseteq \ff_q^m$:
\begin{align}
\label{eq:umb}
\max_{\sigma \in B}w(A\sigma +b).
\end{align}

The \texttt{Unconstrained MB-WISH} algorithm is presented as Algorithm~\ref{algo:mbawish}. The main result of this section is the following.
\begin{theorem}\label{thm:mba}
Suppose $q>2$ is a power of a prime and a positive integer $r \le \lfloor{\frac{q-1}{2}}\rfloor$. Let $\Omega = \ff_q^n$. For any  $\delta> 0$, 
Algorithm \ref{algo:mbawish} makes $\Theta(n \log \frac{n}{\delta})$ calls to the MAX-oracle (cf.~\eqref{eq:umb}), and with probability at least $1-\delta$ outputs a $(\frac{q}{r})^{2}$-approximation of $S_w(\Omega)$. 
\end{theorem}

\begin{algorithm}                 
\caption{\texttt{Unconstrained MB-WISH} algorithm for $\Omega = \ff_q^n$ and a weight function $w$\label{algo:mbawish}}          
\begin{algorithmic}                 
      \REQUIRE $\ell \rightarrow \lceil \frac{1}{\gamma}\ln \frac{2n}{\delta} \rceil$, $r,n'=\lceil n \log_{q/r} q\rceil$
      \STATE $M_0 \equiv \max_{\sigma\in \ff_q^n} w(\sigma)$
       \FOR{$i\in\{1,2,\dots,n\}$}
         \FOR{$k \in \{1,\dots,\ell\}$}
           \STATE Sample a full rank matrix uniformly at random from the set of all full rank $n \times n$ matrices in  $\ff_{q}^{n\times n}$ and construct matrices $A$ and $R$ by taking the first $n-i$ columns and the last $i$ columns respectively.  Sample $b \in \ff_{q}^{n}$ uniformly at random 
           \STATE $w_{i}^{(k)}=\max_{\substack{x \in \ff_{q}^{n-i} \\ y \in \{\alpha_0,\alpha_1,\dots,\alpha_{r-1}\}^{i}}} w(Ax+Ry+b)$
        \ENDFOR
         \STATE $M_{i}={\rm Median}(w_{i}^{(1)},w_{i}^{(2)},\dots,w_{i}^{(\ell)})$
       \ENDFOR
       \FOR{$i\in\{n+1,\dots,n'\}$}
         \FOR{$k \in \{1,\dots,\ell\}$}
           \STATE Sample full rank matrix $A \in \ff_{q}^{n \times n}, b \in \ff_{q}^{n}$ uniformly at random. Set $\mathcal{S}_{i}$ as defined in Equation \eqref{def:si}
           \STATE $w_{i}^{(k)}=\max_{y \in \mathcal{S}_i}w(Ay+b)$
        \ENDFOR
         \STATE $M_{i}={\rm Median}(w_{i}^{(1)},w_{i}^{(2)},\dots,w_{i}^{(\ell)})$
       \ENDFOR 
       \STATE Return $M_{0}+(\frac{q}{r}-1)\sum_{i=0}^{n'-1}M_{i+1}\big(\frac{q}{r}\big)^{i} $
\end{algorithmic}
\end{algorithm}

To prove this theorem we borrow some ideas from coding theory. We define a linear $q$-ary code $C$ of dimension $n-m$ and length $n$ as the set of vectors $\{Ax: x\in \ff_q^{n-m}\}$ where $A$ is a full-rank matrix of size $n \times n-m$ and rank $n-m$. For a vector $a \in \ff_{q}^{n}$, we define the set $\{a+C\}$ as a coset of $C$. It is well known that  $\ff_{q}^{n}$ is partitioned by the $q^{m}$ distinct  cosets, each of size $q^{n-m}$. The main technique behind our algorithm is that for a random linear code $C$ of size $q^{n-m}$, we randomly sample $r^{m}$ distinct cosets of $C$. Subsequently, we find the maximum value $w(x)$ of an element among those $r^{m}$ cosets. 

Let $E \in \ff_q^{n \times n}$ be an $n\times n$ full rank matrix randomly and uniformly chosen from the set of all $n \times n$ rank-$n$ matrices over $\ff_q$. One can choose such a matrix via rejection sampling: independently and uniformly sample the entries of the matrix from $\ff_q$ and then reject the matrix and resample it if it is not full rank.  Let $A$ denote the random matrix formed by the first $n-m$ columns of $E$ as columns and let $R$ be the random matrix formed by the remaining $m$ columns of $E$ as columns. Also let $b$ be a vector sampled randomly and uniformly from $\ff_{q}^{n}$.
The MAX-oracle for \texttt{Unconstrained MB-WISH} is going to perform the following optimization   when $m \le n$:
\begin{align}
\max_{\sigma_1 \in \ff_q^{n-m}, \sigma_2 \in  \{\alpha_0,\alpha_1,\dots,\alpha_{r-1}\}^{m}} w(A\sigma_1+R\sigma_2 +b ).
\end{align}

Analogous to Theorem~\ref{thm:mb}, here we are creating union of $r^m$ distinct random bins. If we can prove that, for any element of $\ff_q^n$, the probability that it belongs to one of these bins is $(\frac{r}{q})^m$ and for any pair of different elements from $\ff_q^n$, whether they belong to one of these bins are independent (pairwise independence), the rest of the proof of Theorem~\ref{thm:mba} will  just follow that of Theorem~\ref{thm:mb}.

In particular, we just have to prove the lemma that is analogous to Lemma~\ref{lem:pair}.
Define a set
\begin{align*}
S_{A,R,b}\equiv \{Ax+b+Ry \mid x \in \ff_q^{n-m},y \in \{\alpha_0,\alpha_1,\dots,\alpha_{r-1}\}^{m}\}.
\end{align*}
For each configuration $\sigma \in \ff_{q}^{n}$, associate an indicator random variable $Z_{\sigma}$ denoting whether $\sigma \in S_{A,R,b}$ .

\begin{lemma}\label{lem:ach1}
For each configuration $\sigma \in \ff_{q}^{n}$, we must have $\Pr(Z_{\sigma}=1)=\Big(\frac{r}{q}\Big)^{m}$ and moreover for any two distinct configurations $\sigma_1,\sigma_2 \in \ff_{q}^{n}$, we must have $\Pr(Z_{\sigma_1}=1 \wedge Z_{\sigma_2}=1) \le (\Pr(Z_{\sigma}=1))^2$. 
\end{lemma}

\begin{proof}
Notice that $S_{A,R,b}$ is a union of distinct cosets and therefore,
\begin{align*}
S_{A,R,{\bf 0}} \equiv \bigcup_{y \in \{\alpha_0,\alpha_1,\dots,\alpha_{r-1}\}^{m}} S_{A,{\bf 0}}(y),
\end{align*}
where $S_{A,{\bf 0}}(y) \equiv  \{Ax+Ry \mid x \in \ff_q^{n-m}\}$ is defined as a particular coset with a fixed $y \in \{\alpha_0,\alpha_1,\dots,\alpha_{r-1}\}^{m}$. Hence $|S_{A,R,{\bf 0}}|=q^{n-m}r^{m}$ and since $S_{A,R,b}$ is simply a random affine shift of $S_{A,R,{\bf 0}}$, $|S_{A,R,b}|=q^{n-m}r^{m}$ as well. Now for a vector $\sigma \in \ff_q^{n}$, we must have
\begin{align*}
\Pr(Z_{\sigma}=1)=\sum_{y \in S_{A,R,b}} \Pr(\sigma=y)=\frac{|S_{A,R,b}|}{q^{n}}=\Big(\frac{r}{q}\Big)^{m}.
\end{align*} 
Next, for two configurations $\sigma_1,\sigma_2 \in \ff_{q}^{n}$, we have that
\begin{align*}
\Pr(Z_{\sigma_1}=1 \wedge Z_{\sigma_2}=1)&=\sum_{y_1,y_2} \Pr(\sigma_1 \in S_{A,b}(y_1) \wedge \sigma_2  \in S_{A,b}(y_2)) \\
&=\sum_{y_1,y_2} \Pr(\sigma_{1} \in S_{A,b}(y_1) \mid \sigma_{2} \in S_{A,b}(y_2))\Pr(\sigma_2 \in S_{A,b}(y_2)) \\
&=\sum_{y_1,y_2} \Pr(\sigma_{1}-\sigma_{2} \in S_{A,{\bf 0}}(y_{1}-y_{2}))\Pr(\sigma_2 \in S_{A,b}(y_2)). 
\end{align*}
Therefore we just need to evaluate the probability of the event $\Pr(\tau \in S_{A,{\bf 0}}(z))$ for $\tau=\sigma_1-\sigma_2 \neq 0$ and $z=y_1-y_2$. Now, if $z={\bf 0}$, $\Pr(\tau \in  S_{A,{\bf 0}}(\bf{0}))$ is equal to $\frac{q^{n-m}-1}{q^{n}-1}$ since $A$ is a randomly chosen full rank matrix, i.e.,  $|\{Ax:x \in \ff_q^{n-m} \}|\setminus \{0\}=q^{n-m}-1$. Now, since 
 $\{Ax+Rz:x \in \ff_q^{n-m}\}, z \neq 0,$ is a uniformly random coset of $\{Ax:x \in \ff_q^{n-m} \}$, 
we have, 
\begin{align*}
&\Pr(\tau \in S_{A,{\bf 0}}(z) \mid z \neq 0, \tau  \in \{Ax:x \in \ff_q^{n-m} \})=0 \\
\textrm{ and }&\Pr(\tau \in S_{A,{\bf 0}}(z) \mid z \neq 0, \tau \notin \{Ax:x \in \ff_q^{n-m} \})=\frac{1}{q^{m}-1}.
\end{align*}
Hence,
\begin{align*}
\Pr( \tau \in S_{A,{\bf 0}}(z) \mid z \neq 0)&=
\Big(1-\frac{q^{n-m}-1}{q^{n}-1}\Big)\frac{1}{q^{m}-1}=\frac{q^{n-m}}{q^{n}-1}.
\end{align*}
Therefore, we have that 
\begin{align*}
\Pr(Z_{\sigma_1}=1 \wedge Z_{\sigma_2}=1)&=\sum_{y_1,y_2: y_1=y_2}  \frac{q^{n-m}-1}{q^{n}-1}+\sum_{y_1,y_2: y_1 \neq y_2}  \frac{q^{n-m}}{q^{n}-1} \\
&= r^m\Big( \frac{q^{n-m}-1}{q^{n}-1}\Big)+\Big(r^{2m}-r^m\Big)\Big( \frac{q^{n-m}}{q^{n}-1}\Big) \\
&= \frac{r^m(r^mq^{n-m}-1)}{q^n-1} \\
& \le \Big( \frac{r}{q} \Big)^{2m} = \Pr(Z_{\sigma_1}=1)^{2}
\end{align*}
and hence we have the statement of the lemma.
\end{proof}
Although the two random variables $Z_{\sigma_1}$ and $Z_{\sigma_2}$ defined above are not independent, we show that they are {\em negatively correlated}. Note that, the pairwise independence was then subsequently used in computing a variance for the Chebyshev's inequality (see Lemma~\ref{lem:cru}). However, the negative correlation is sufficient to obtain an upper bound on the variance.

From Algorithm \ref{algo:mbawish} it is clear that Lemma \ref{lem:ach1} allows us to obtain the values of $M_i$ for $i \in \{1,2,\dots,n\}$.
Indeed, the MAX-oracle is not well defined when $m > n$.
 In order to obtain the values of $M_{i}$ for $i \in \{n+1,\dots,n'\}$, we propose the following technique.

Recall that the elements of $\ff_{q}^{n}$ can be represented as $n$ dimensional vectors where each element belongs to $\ff_{q}$. Moreover we defined an ordering over the elements of the finite field $\ff_{q}\equiv  \{\alpha_0, \alpha_1,\dots , \alpha_{q-1}\}$ so that $\alpha_{i} \prec \alpha_{j}$ for $i <j$. 
Consider the lexicographic ordering of the elements (vectors) of $\ff_{q}^{n}$.
Let $s_m$ be the $\lceil \frac{r^{m}}{q^{m-n}} \rceil$ th element in this ordering of $\ff_{q}^{n}$. Define the set 
\begin{align}\label{def:si}
\mathcal{S}_m=\{x \in \ff_{q}^{n} \mid x \prec s_m \} 
\end{align}
for all $m>n$. 
Now, let $A \in \ff_q^{n \times n}$ be an $n\times n$ full rank matrix randomly and uniformly chosen from the set of all $n \times n$ rank-$n$ matrices over $\ff_q$, which can be generated by rejection sampling as before.
Let  $b \in \ff_{q}^{n}$ be a uniform random vector. Subsequently, the MAX-Oracle for \texttt{Unconstrained MB-WISH} solves the following optimization problem for $m>n$:
\begin{align*}
\max_{y\in \mathcal{S}_m} w(Ay+b).
\end{align*}
In order to analyze the statistical properties of this oracle, define the random set
\begin{align*}
T_{A,b,m} \equiv \{Ay+b \mid y \in \mathcal{S}_{m}\}.
\end{align*}	
Again, for each configuration $\sigma \in \ff_{q}^{n}$, associate an indicator random variable $Z_{\sigma}$ denoting $\sigma \in T_{A,b,m}$.

\begin{lemma}
For each configuration $\sigma \in \ff_{q}^{n}$, we must have $\Big(\frac{r}{q}\Big)^{m} -\frac1{q^n}\le \Pr(Z_{\sigma}=1)\le  \Big(\frac{r}{q}\Big)^{m}$ and moreover for any two configurations $\sigma_1,\sigma_2 \in \ff_{q}^{n}$,  $\Pr(Z_{\sigma_1}=1 \wedge Z_{\sigma_2}=1) \le (\Pr(Z_{\sigma}=1))^2.$ 
\end{lemma}
\begin{proof}
We have, 
\begin{align*}
\Pr(Z_{\sigma}=1)=\sum_{y \in T_{A,b,m}} \Pr(\sigma=y)=\frac{|T_{A,b,m}|}{q^{n}}=\frac{1}{q^n}\Big\lfloor \frac{r^{m}}{q^{m-n}} \Big\rfloor, 
\end{align*} 
which proves the first claim.
Next, for two distinct configurations $\sigma_1,\sigma_2 \in \ff_{q}^{n}$, we have that
\begin{align*}
\Pr(Z_{\sigma_1}=1 \wedge Z_{\sigma_2}=1)&=\sum_{y_1,y_2 \in T_{A,b,m}} \Pr(\sigma_1=y_1 \wedge \sigma_2=y_2) \\
&=\sum_{y_1,y_2 \in \mathcal{S}_{m}} \Pr(\sigma_1=Ay_1+b \wedge \sigma_2=Ay_2+b) \\
&=\sum_{y_2 \in \mathcal{S}_{m}} \Pr(\sigma_2=Ay_2+b)\sum_{y_1 \in \mathcal{S}_m} \Pr(\sigma_1=Ay_1+b \mid \sigma_2=Ay_2+b) \\
&=\sum_{y_2 \in \mathcal{S}_{m}} \Pr(\sigma_2=Ay_2+b)\sum_{y_1 \in \mathcal{S}_m} \Pr(\sigma_1-\sigma_2=A(y_1-y_2)) \\
\end{align*}
Since $\sigma_1 \neq \sigma_2$, we must have that $\Pr(\sigma_1-\sigma_2=A(y_1-y_2) \mid y_1=y_2)=0$. For $y_1 \neq y_2$, every configuration $\sigma \in \ff_{q}^{n}, \sigma \neq 0$ is equally probable to be $A(y_1-y_2)$ since $A$ is uniformly and randomly sampled full rank matrix. Hence,
\begin{align*}
\Pr(Z_{\sigma_1}=1 \wedge Z_{\sigma_2}=1)&=\frac{1}{q^{n}(q^n-1)}\Big\lfloor \frac{r^{m}}{q^{m-n}} \Big\rfloor \Big(\Big\lfloor \frac{r^{m}}{q^{m-n}} \Big\rfloor-1\Big) \le (\Pr(Z_{\sigma}=1))^2.
\end{align*}
\end{proof}

The remainder of the proof of Theorem~\ref{thm:mba} follows that of Theorem \ref{thm:mb} in a straightforward manner.



\section{Experimental results} \label{sec:exp}
All the experiments were performed in a shared parallel computing environment that is equipped with 50 compute nodes with 28 cores Xeon E5-2680 v4 2.40GHz processors with 128GB RAM. 

\paragraph{Experiments on simulated Potts model (regular degree graph).}
We implemented our algorithm to estimate the partition function of  Potts Model. Recall that the partition function of the  Potts model on a graph $G=(V,E)$   is given in Eq.~\eqref{eq:part}. First of all, we computed partition functions for small graphs where a brute-force algorithm can also be used.
For our simulation, we have randomly generated the graph $G$ with number of nodes $n \equiv |V|$ varying in $4,5,6,7,8,9,$ and corresponding regular degree $d=2,2,4,4,4,4,$ using a python library \texttt{networkx}. We took the number of states of the Potts model $q=5$, the external force $H$ and the spin-coupling $J$   to be 0.1 and then varied the values of $\zeta$.
The partition functions for different cases are calculated using both brute force and our algorithm (\texttt{MB-WISH}). We have used a python module \texttt{constraint}  to handle the constrained optimization for MAX-oracle. The obtained approximation factors for different  $\zeta$ are listed in Table~\ref{tab:part}. The worst approximation factor observed in all these trials is $5.442$.
This experiment shows that, for small graphs the partition functions computed by \texttt{MB-WISH} are good approximations to the actual values. 

\remove{
For $n=10, 11, d=6$ the approximation factor for \texttt{MB-WISH} is exactly 1 (up to the precision of the number system used). However the time taken by \texttt{MB-WISH} is $30$ minutes for $n=10$ and two hours for $n=11$ which is much higher than the time required in the results for Table~\ref{tab:part}.
For $n=12, d=8$, \texttt{MB-WISH} gives an approximation factor of $2.5$ after running for eight hours in the above  parallel computing environment.
}
\begin{table*}[t]
\begin{center}
	\begin{tabular}{| c| c |c |c | c |c|c|  }
		\hline
$\zeta $ & $n=4, d=2$ & $n=5,d=2$ & $n=6,d=4$ & $n=7,d=4$ & $n=8,d=4$ & $n=9,d=4$ \\
\hline
 0 & 0.976 & 1.220 & 0.610 & 1.907 &  0.953 & 1.192\\
 \hline
 5& 0.580 & 0.708 & 1.639  &  0.755 &  0.630 & 0.599\\
\hline
 10 & 0.7470 & 1.191 & 3.271 & 0.989 & 1.875 &  1.25\\
\hline
 15 &   1.430 & 1.036 & 1.013 & 1.224 & 1.399 & 1.692\\
 \hline
 20 &  1.032 & 1.590 & 1.141 & 1.173 & 1.365 & 1.491\\
 \hline
 25 & 0.839 & 1.118 & 1.339 & 1.035 & 1.429 & 1.326\\
 \hline
 30 & 0.510 & 4.0562 & 2.226 & 1.060 & 0.690 & 2.122 \\
 \hline
 35 & 1.073 & {\bf 5.442} & 0.489 & 2.871 & 1.639 & 1.263\\
 \hline
 40 &  1.210 &  2.434 &  0.980 &  0.582 &  0.666 & 0.969 \\
 \hline
 45 & 1.127 &  4.640 &  2.348 &  1.336&  0.3673 & 1.341\\
 \hline
 50 &  1.152 &  1.025 &  2.511 &  3.4307 &  1.1522 &2.636 \\       
\hline
	\end{tabular}
\caption{The ratio of the partition function calculated by \texttt{MB-WISH} ($r=2$) and the actual value calculated by brute force: $\frac{\hat{Z}}{Z}$. \label{tab:part}}
\end{center} 
\end{table*}

\begin{table}[h!]
\begin{center}
	\begin{tabular}{|c|c|c|c|c|c|c|c|c|c|}
		\hline
		\multirow{2}{*}{$n$}	& \multicolumn{3}{c|}{$\zeta=1$} &  \multicolumn{3}{c|}{$\zeta=2$} &    \multicolumn{3}{c|}{$\zeta=5$} 	\\
		\cline{2-10}
		 & MB-WISH  & BP & MCMC & MB-WISH  & BP & MCMC & MB-WISH & BP & MCMC\\
		\hline
		10 & 15.16 & 15.51 & 12.60 & 14.35 & 14.98 & 12.06 & 13.07 & 13.56 & 10.65 \\
		\hline
         15 &	23.10 & 23.27 &	 20.51 & 22.35 & 22.47 & 19.70 & 19.95 & 20.35 & 17.59 \\
         \hline		
		20 & 31.04 & 31.03 & 28.69 & 29.98 & 29.96 & 27.62 & 26.93 & 27.13  & 24.80 \\
		\hline
		25 &  38.28 & 38.79 & 36.63 & 37.41 & 37.45 & 35.29 & 33.41 &  33.92 & 31.76 \\
		\hline
		30 &  46.23 & 46.55 & 44.49 & 44.51 & 44.94 & 42.89 & 40.82 & 40.705 & 38.65 \\
		\hline
		40 &  61.88 &  62.06 & 59.75 & 59.55 & 59.92 & 57.61  & 54.96 & 54.27 & 51.96 \\
		\hline
		50 &  77.31 & 77.58 & 75.28 &  74.69 & 74.90 &  72.59 &  68.62 & 67.84 & 65.54 \\
		\hline
	\end{tabular}
\end{center}
\caption{\small Log-partition function computed by \texttt{unconstrained MB-WISH}, Belief Propagation (BP) and Markov Chain Monte Carlo (MCMC) respectively for the cases of $\zeta=1,2$ and $5$. } \label{tab:part1}
\end{table}

For graphs with larger number of vertices, it is not possible to compute the partition function of Potts Model by brute force. Therefore, we compare the partition function computed by \texttt{Unconstrained MB-WISH} ($\hat{Z}$) with 
two standard techniques: Belief propagation (BP)  \cite{koller2009probabilistic} and Markov-Chain-Monte-Carlo (MCMC) \cite{jerrum1996markov}. It is known that BP provides exact result when the underlying graph is cycle-free \cite{koller2009probabilistic}. To implement this we use the PGMPY library in \texttt{python} \cite{PGMPY}. For MCMC,  we employ the popular Metropolis-Hastings (MH) algorithm \cite{kroese2011handbook} to sample random points from $\Omega$, where we evaluate the function $w:\Omega\to \reals$ and take a scaled-sum to estimate the discrete integration problem. We have calculated the average of the partition function over 10 different trials of the MH algorithm, and  each trial was given the same time as that of \texttt{Unconstrained MB-WISH}.

Again, for our simulation, we have randomly generated the graph $G$ with number of nodes $n \equiv |V|$ varying in $10,\dots,50,$ and with regular degree $d=4$ using a python library \texttt{networkx}.
We took the number of states of the Potts model $q = 5$, the external force $H$ and the spin-coupling   to be   $ 0.1$ and then varied the values of  $\zeta $. In our experiments each optimization  instances are run with a timeout of $10,15 ,20,20,25 $ minutes for $n=20,25,30,40,50$ respectively (we let the $n=10$ case run without a time constraint). The results are summarized in Table \ref{tab:part1}. 
 It can be observed that the partition functions computed with MCMC deviate somewhat from that computed with belief propagation, whereas \texttt{MB-WISH} gives  values closer to the belief propagation results.

Since for cycle-free graphs, BP can provide exact result, it gives an opportunity to compare \texttt{MB-WISH} with the single-bin version (i.e., Ermon et al.'s original algorithm)  for moderate values of $n$ and $q$. We perform the next experiment on a path-graph, which is an undirected graph where there are exactly two nodes of degree $1$ and every other node has degree $2$.
We perform the experiment with the number of nodes $n \equiv |V|$ varying in $20,\dots,50$ on a path-graph such that the number of states $q=31$ and the external parameters $J=0.1$, $H=0.5$ and $\zeta=-5$. For two different values of $r$, respectively $1$ (single-bin) and $15$ (multi-bin) we compute the estimates of the partition function. We have plotted the ratio of the estimates with the corresponding ones computed by BP (which is exact), in Figure~\ref{fig:chain}.
It is clear from the figure and the table that the \texttt{Unconstrained MB-WISH} performs much better than its single-bin counterpart. 
The timeout for each call to the oracle is chosen to be $n/10$ where $n$ is the number of nodes in the graph.
\begin{figure}[htbp]
  \centering
   \includegraphics[width=\textwidth]{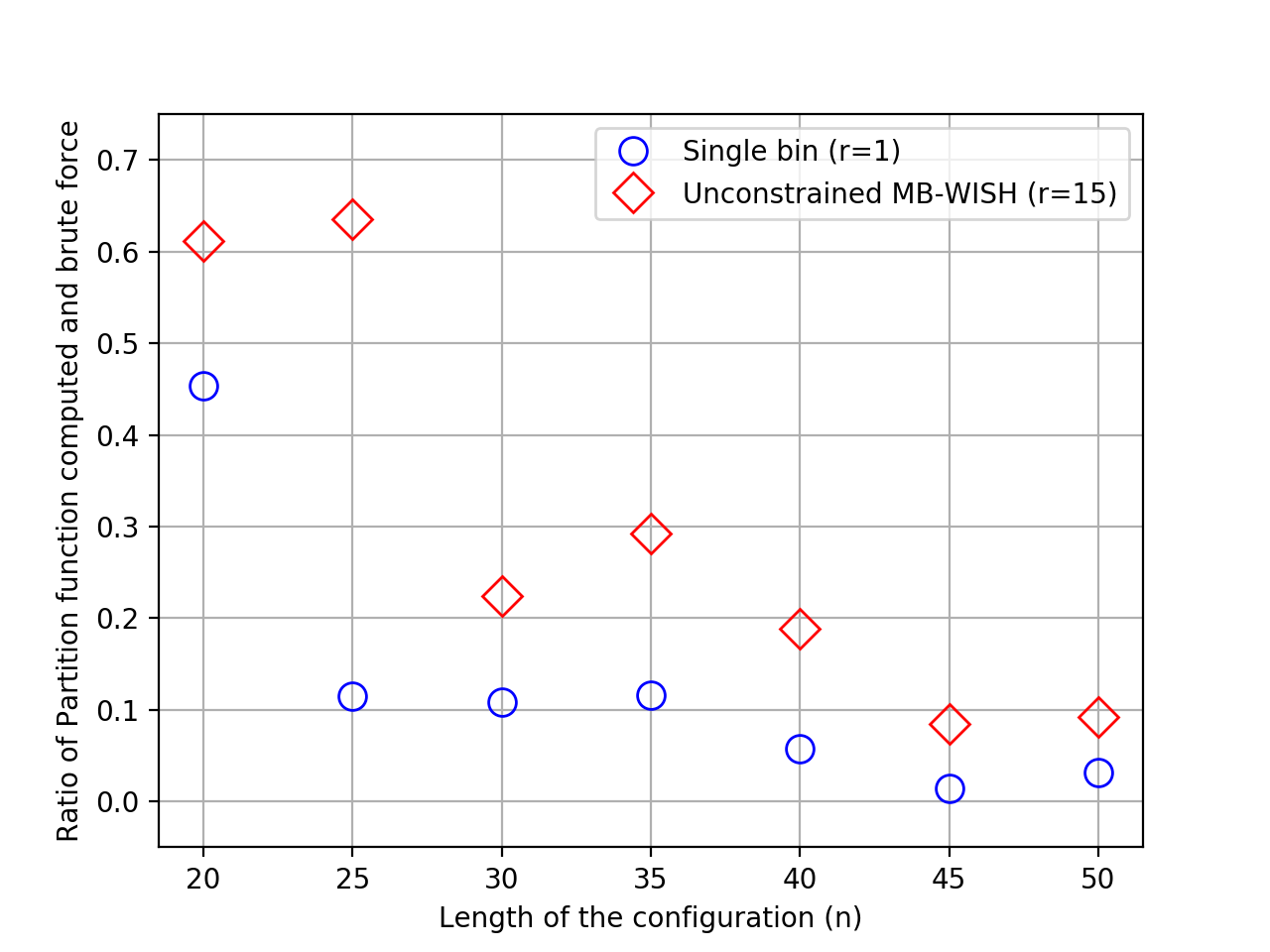}
    \caption{Comparison of approximation ratios obtained by using \texttt{Unconstrained MB-WISH} (red) and single-bin (Ermon et al.'s method) (blue). A ratio closer to 1 is better.}
    \label{fig:chain}
 \end{figure}
 
 \remove{
\begin{table}[htbp]
\begin{center}
\begin{tabular}{|c|c|c|c|c|}
\hline
   n   & BP & ACH  & UMB \\
 \hline
20    &  76.14619751 &  75.35570485 & 75.65427464 \\
\hline
25    &  95.20175898 &  93.03898942 &  94.74767075]\\
\hline
30    & 114.25732045 & 112.03793827 &  112.76124377\\
\hline
35    & 133.31288193  & 131.1604206 &  132.0816525 \\
\hline
40    & 152.3684434 & 149.51871384 &  150.69899498 \\
\hline
45    & 171.42400487 & 167.19513466 & 168.95221942 \\
\hline
50     & 190.47956634 & 187.03099934 & 188.09875389 \\
\hline
\end{tabular}
\end{center}
\caption{Log of the partition values computed by using Belief Propagation, \texttt{Unconstrained MB-WISH} with optimum value of $r$ (UMB) and by using naive generalization of the method proposed by Achlioptas \cite{achlioptas2015stochastic} (equivalent to substituting $r=1$. (ACH)}
\end{table}
}

\remove{
\paragraph{Knapsack Counting Problem:}
%
%

MCMC is in particular effective for non-binary knapsack counting problem \cite{jerrum1996markov} \arya{Is this right?}.
In this problem, for a given $n$ dimension vector $\av \in \reals^n$ and a number $b \in \reals,$ we are interested in estimating the size of the set $S_{\av,b} \equiv \{\xv \in \{0,1, \dots, q-1\}^n: \av^T \xv \le b\}$. 

\begin{table}[h!]
\begin{center}
	\begin{tabular}{|c|c|}
		\hline
		   MCMC&   \texttt{MB-WISH} \\
		\hline 
	     $0.9283$  & $0.1397$ \\
		\hline
		 $3.016$ 	& $0.4101$\\
        \hline
	     $1.356$  &  $0.965$  \\
        \hline
	     $5.752$ 	& $4.828$ 	\\
	   \hline
	     $0.415$	&	$0.324$ \\
\hline
$3.225$ 	& $1.402$  \\
\hline		
\end{tabular}
\begin{tabular}{|c|c|}
	\hline
 MCMC & \texttt{MB-WISH}\\
\hline	
60.97814187 & 60.06565336 \\
\hline
       60.94006236 & 54.97742213 \\
\hline
       58.63944543 & 50.39598769 \\
\hline 
       58.55138398 & 45.30369715 \\
\hline
\end{tabular}
\caption{\small: \small Left: $|\log |\hat{S}_{\av,b|}-\log|S_{\av,b}||$ of counting in the knapsack problem $(n=12, q=5)$ for six trials. Right: Log of absolute count for four trials $(n=40, q=5)$.\label{tab:mcmck}}
\end{center} 
\end{table}
\soumya{In the Metropolis Hastings (MCMC) method we averaged the result over 10 different trials. For the purpose of the algorithm we have used $10000$ samples for mixing of the Markov Chain. Subsequently, we used the resulting distribution induced by the Markov Chain in order to sample from viable solutions to the knapsack problem. We sample viable solutions (sample size) using the Markov Chain within a certain time frame (\textit{timeout}) for each recursive step of the algorithm. For the interested reader, the MCMC algorithm for the knapsack problem is presented in details in \cite{jerrum1996markov}. First, for $q=5$ and $n =12$, we compute by the MCMC methods (averaged over 10 different trial with timeout of 3 minutes each) $|S_{\av,b}|$ for six different values of $(\av,b)$-tuples. We compare this with $|S_{\av,b}|$ computed by brute-force and by \texttt{Unconstrained MB-WISH} algorithm with 3 minute timeout for MAX-oracle. 
We compute the absolute difference of $\log |\hat{S}_{\av,b}|$ as computed by one of the techniques (MCMC and \texttt{Unconstrained MB-WISH}) and $\log|S_{\av,b}|$ as computed by brute force. The comparison of the two techniques for the six trials are reported in Table~\ref{tab:mcmck} and it is clear that \texttt{Unconstrained MB-WISH} performs better than MCMC for all the trials. Further, we test MCMC and \texttt{MB-WISH} on the knapsack counting problem for $n=40,q=5$ where brute-force computation is not possible because of larger value of $n$.  For \texttt{unconstrained MB-WISH}, each call to MAX-oracle was given a 20 minutes timeout,  and for MCMC the same amount of time was used. The counts for four trials are reported in Table~\ref{tab:mcmck}. This shows the performance of \texttt{MB-WISH} is comparable to that of MCMC.
and the results indicate that \texttt{MB-WISH} is a viable alternative to MCMC for high dimensional integration/counting problems. A more detailed theoretical comparison of MCMC and \texttt{MB-WISH} is left for future work.}
}

\begin{figure}[htbp]
  \centering
  \begin{subfigure}{0.45\textwidth}
    \includegraphics[width=\textwidth]{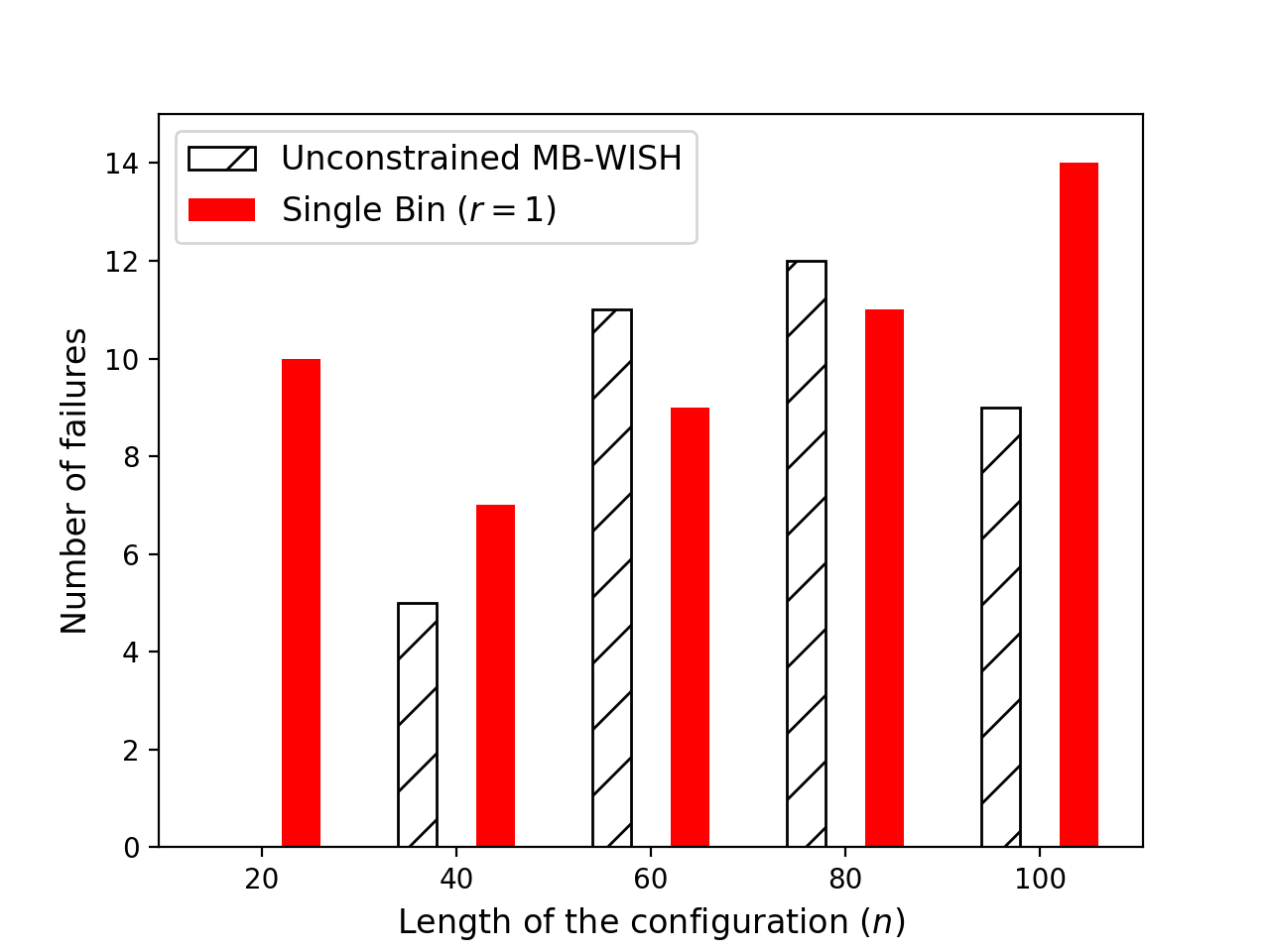}
    \subcaption{Number of times, among 100 trials,  the computed total variation distance is above $4 \times$ the theoretical upper bound for $\epsilon=10^{-2}$}
    \label{fig:ub}
  \end{subfigure}~~
  \quad
  \begin{subfigure}{0.45\textwidth}
    \includegraphics[width=\textwidth]{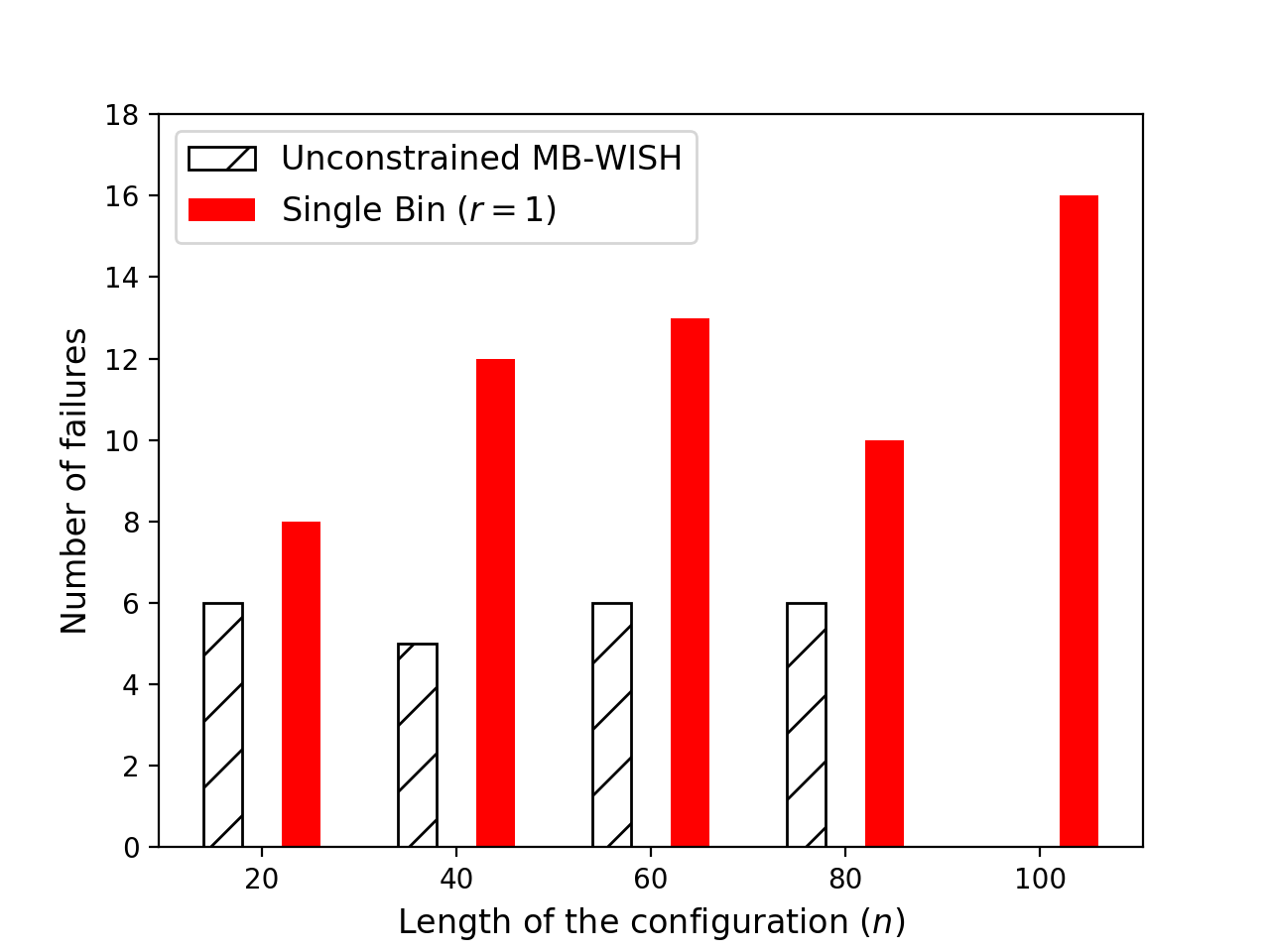}
    \subcaption{Number of times, among 100 trials,  the computed total variation distance is above $4 \times$ the theoretical upper bound for $\epsilon=10^{-4}$}
    \label{fig:ub2}
  \end{subfigure}~~
  \quad \\
  \begin{subfigure}{0.45\textwidth}
    \includegraphics[width=\textwidth]{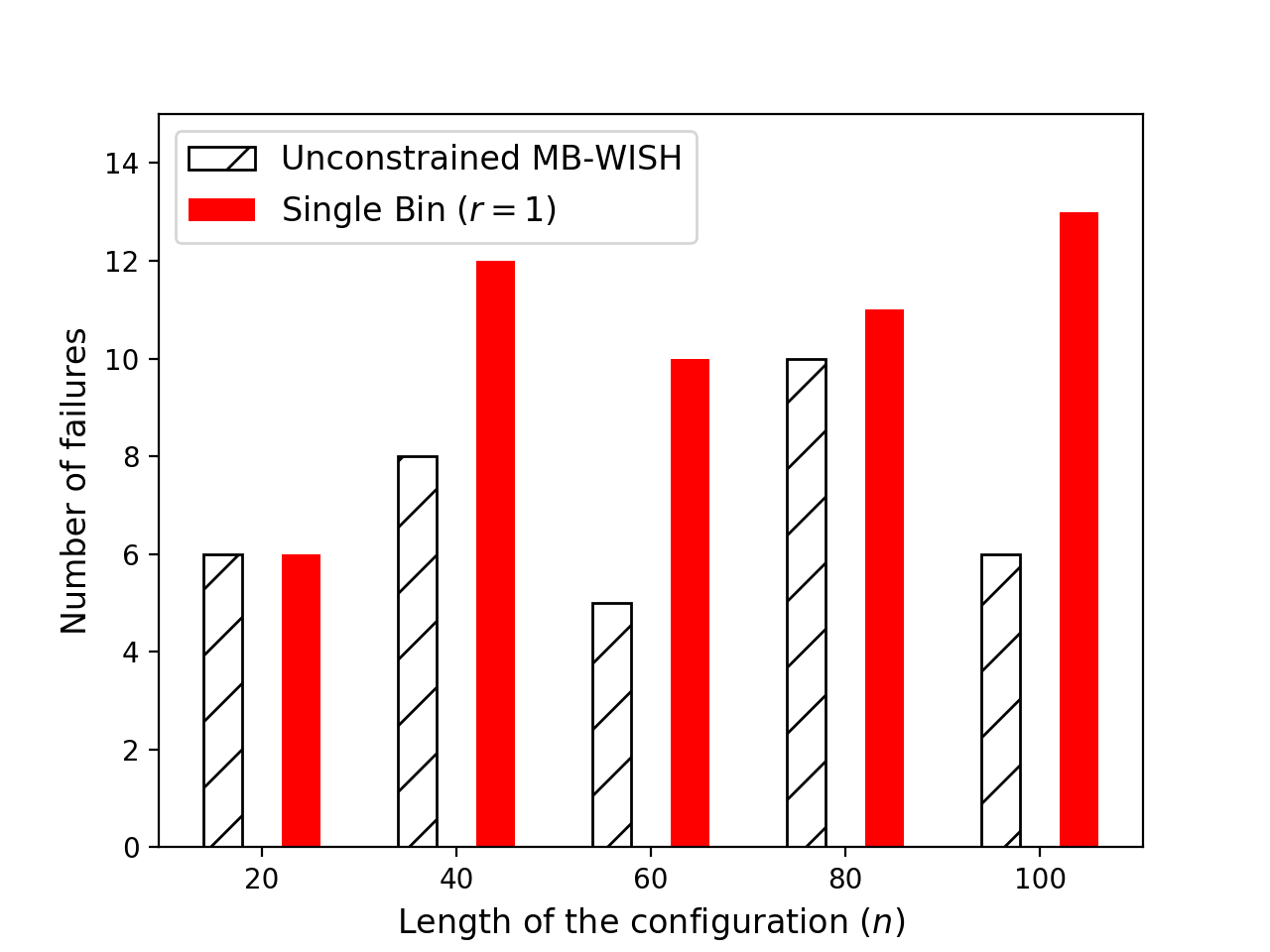}
    \subcaption{Number of times, among 100 trials,  the computed total variation distance is above $4 \times$ the theoretical upper bound for $\epsilon=10^{-6}$}
    \label{fig:ub4}
  \end{subfigure}~~
  \quad
   \begin{subfigure}{0.45\textwidth}
    \includegraphics[width=\textwidth]{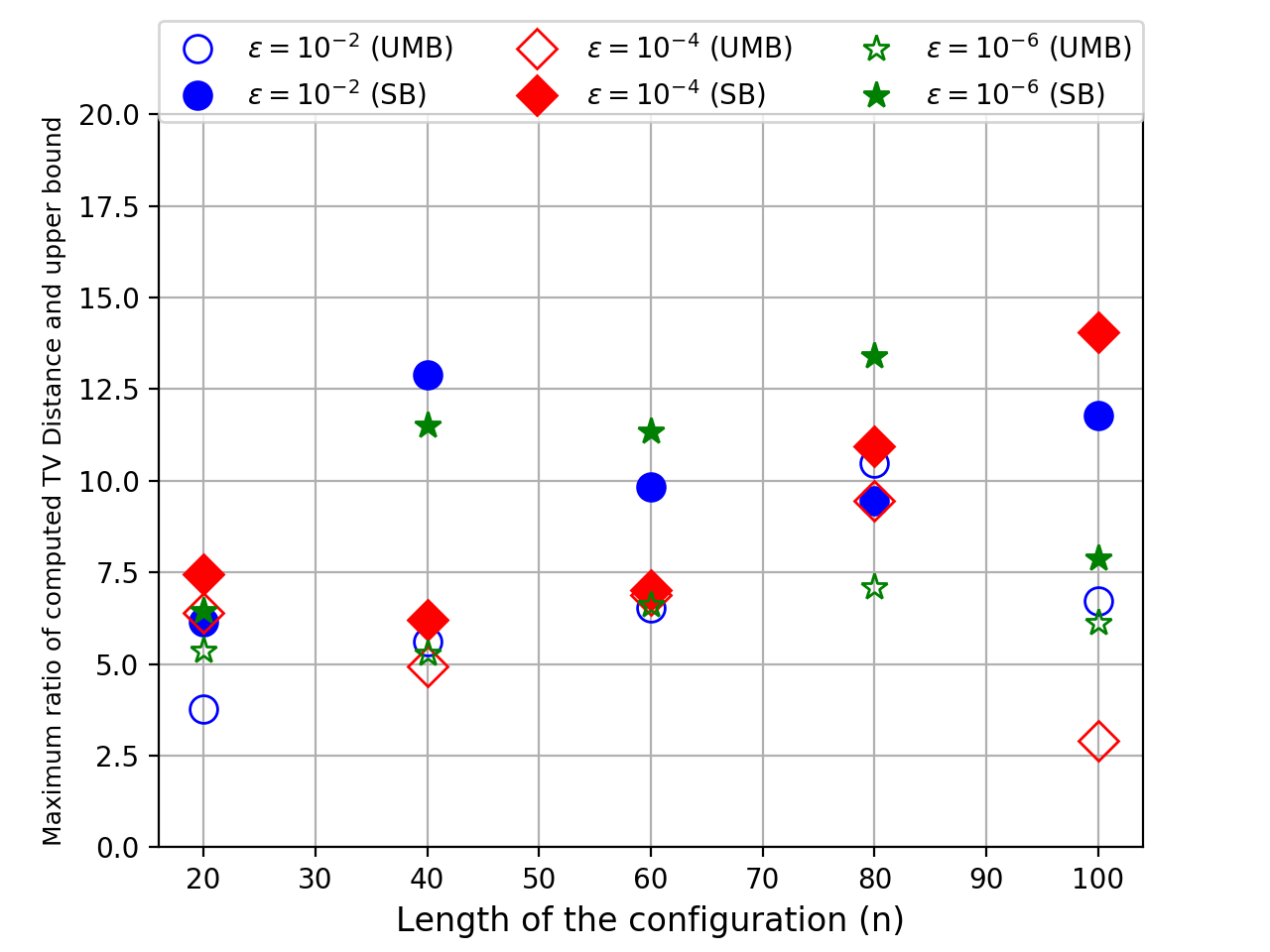}
    \subcaption{The maximum ratio of the computed total variation distance and the upper bound}
    \label{fig:max}
  \end{subfigure}~~
  \caption{Behavior of computed total variation distance by Algorithm \ref{algo:mbawish} using $r=2$ (Unconstrained MB-WISH or  UMB) and $r=1$ (Single Bin or SB) with respect to the theoretical upper bound.\label{fig:tv}}
\end{figure}

\paragraph{Experimental results on computing Total Variation distance.}
We show one more instance of discrete integration where \texttt{MB-WISH} is useful. The purpose of this experiment is to show the effectiveness of \texttt{MB-WISH} by choosing counting problems where good theoretical results are available.

For this, we use \texttt{Unconstrained MB-WISH} to compute {\em total variation distances} between two high dimensional (up to dimension $100$) probability distributions, generated via an iid model. Although, it is computationally hard to compute the total variation distance between two distributions, for the special case of product distributions, we can derive theoretical expressions that are known to bound the total variation distance from above and below. 


 The total variation (TV) distance between any two discrete distributions $P$ and $Q$ with common sample space $\Pc$ is defined to be
$$
\|P-Q\|_{TV} = \sup_{A\subseteq \Pc}|P(A) -Q(A)| = \frac12 \sum_{\sigma \in \Pc}|P(\sigma) -Q(\sigma)|.
$$
Simply consider finding TV distance between joint distributions of $n$ random variables that can take value in $\{0, 1, \dots, q-1\}$. In that case, we seek to find,
$$
\frac12 \sum_{\sigma \in \{0,1, \dots, q-1\}^n}|P^{n}(\sigma) -Q^{n}(\sigma)|,
$$
which is in the exact form of Eq.~\eqref{eq:main}.  Therefore we can use \texttt{MB-WISH} algorithm to estimate the total variation distance. The following are well-known upper and lower bounds on TV distance based on Hellinger distance, $h(P,Q)^2 \equiv \sum_{\sigma \in \Pc}(\sqrt{P(\sigma)}-\sqrt{Q(\sigma)})^2$:\footnote{See \url{https://stanford.edu/class/stats311/Lectures/full_notes.pdf}.}
$$
\frac12 h(P,Q)^2\leq \|P-Q\|_{TV}\leq h(P,Q)\sqrt{1- h(P,Q)^2/4}.
$$
Furthermore,  
\begin{align*}
 h(P^n,Q^n)^2 &=2-2\prod_{i=1}^{n}(1-\frac{1}{2}h(P_i,Q_i)^2).
\end{align*}
For `near-uniform' distributions, it is known that the upper bound is a good approximation~\cite{sason2016f}. 


For the experiments, we choose two distributions defined over $q$ points in the following manner: We choose a vector $\mathbf{v} \in [0,1]^{q}$ randomly and normalize the vector (so that the sum of the elements is $1$) in order to have the first distribution $P \equiv [p_1,p_2,\dots,p_q]$. The second distribution $Q$ is  then chosen to be 
\begin{align*}
Q \equiv [p_1,p_2+\epsilon,p_3-\epsilon,\dots,p_q-\epsilon].
\end{align*} 
where $\epsilon$ is a small number chosen in order to make the two distributions very close to each other.  Here the distribution $P^n$ and $Q^n$ are supported on $\{0,1,2,\dots,q-1\}^n$ where $n$ can be any natural number. Now, we choose $q=5$ and for three different values of $\epsilon=10^{-2}, 10^{-4}$ and $10^{-6}$ and five different values of $n=20,40,60,80$ and $100$, we repeat the experiment described above $100$ times for each setting and use \texttt{Unconstrained MB-WISH} to compute the total variation distance.


 We perform our experiments in a time constrained manner (10 minute for each calls to MAX-oracle). We have shown in Figure \ref{fig:tv} histograms of the number of times the computed total variation distance is above four times the upper bound for $\epsilon=10^{-2}, 10^{-4}$ and $10^{-6}$ respectively in Figures \ref{fig:ub}, \ref{fig:ub2} and \ref{fig:ub4} respectively. 
We chose a factor of four because the theoretical approximation factor guaranteed by \texttt{Unconstrained MB-WISH} is $\sim 4$.
We observed that the total variation distance is always above the upper bound with Hellinger distance but on the other hand, in very few trials the computed value is above four times the upper bound.  Finally, we have also shown the maximum ratio of the computed total variation distance and the upper bound for each value of $\epsilon$ and $n$ in Figure \ref{fig:max}. 

We have compared the results obtained by \texttt{Unconstrained MB-WISH} with the corresponding results obtained by its single bin counterpart (Ermon et al.'s method, choosing $r=1$) in Figure \ref{fig:tv}. Even though $q=5$ is not large, the improvement in performance by using \texttt{Unconstrained MB-WISH} is clear. In Figures \ref{fig:ub}, \ref{fig:ub2}, \ref{fig:ub4}, it can be observed that in the case of single bin, the number of failures (solid red) is almost always larger than the corresponding setting with multiple bins. Moreover, in Figure \ref{fig:max}, the maximum ratio in the setting of single bin (solid) is always much higher than the the setting of multiple bins (hollow).

\paragraph{Real-world constraint satisfaction problem (CSPs).} 
 Many  instances of real-world graphical models are available in 
\url{http://www.cs.huji.ac.il/project/PASCAL/showExample.php}.
Notably, some of them (e.g., image alignment, protein folding) are defined on non-Boolean domains, which justify the use of \texttt{MB-WISH}. We have computed the partition functions for several of them. 

The  dataset \texttt{Network.uai} 
is a Markov network with $120$ nodes each having a binary value. A configuration here is a binary sequence of length $120$. To calculate the partition function, we need to find the sum of weights for $2^{120}$ different configurations. In order to use \texttt{Unconstrained MB-WISH}, we view each configuration as a $16$-ary string of length $30$.  Our results for the log-partition came out to be $156.00$ with one hour time out for each call to the MAX-oracle. The benchmark for the log-partition function is provided to be $163.204$.

%


The  \texttt{Object detection} dataset comprised of $60$ nodes each having a $11$-ary value and by \texttt{Unconstrained MB-WISH} we found the log-partition function to be $-38.9334$.  The \texttt{CSP} dataset is a Markov network with $30$ node having a ternary value: we found the log partition function to be $-39.9933$.  For these datasets there were no baselines available for comparison. The purpose of these experiments were to establish the scalability of \texttt{MB-WISH}.

\section{Conclusion}
Large scale counting problems (or discrete integrations of  nonnegative weight functions) are often computationally intractable, but come up frequently in variety of inference tasks, most prominently as evaluations of partition functions. In this paper we extend a recent technique of hashing and optimization due to Ermon et al. for discrete integration over hypercube $\{0,1\}^n$ to that over hypergrids $\{0,1, \dots, q-1\}^n$. The trivial generalization results in an approximation factor that rapidly becomes worse as $q$ increases. We remedy the situation by providing constant factor approximation algorithms for all $q.$
 
The main drawback of this approach of discrete integration is the delegation of a hard combinatorial optimization to an oracle. In this line of work, an open problem is to come up with hash functions that maintain the essential properties (such as pairwise independence), but make the oracle optimization amenable. While in general this is not possible, for certain classes of weight functions this may be a plausible task and requires further exploration.


%
%



\bibliographystyle{plain}
\bibliography{references}

\appendix
\section{Derandomization: structured hashes}\label{sec:der}
\remove{Finally, {\em We show that it is possible to come up with hash functions of the same form (i.e., $Ax+b$) and same pairwise independence properties using only $m+n$ random bits.} 
If one uses a random sparse Toeplitz matrix, the construction of hash family takes only $O(n)$ random bits and the hash operation can be faster because of the structure of the matrix.}

For the analysis of  \cite{ermon2013taming,ermon2014low} to go through, we needed a family of hash functions that are pairwise independent\footnote{It is sufficient to have the hash family satisfy some weaker constraints, such as being pairwise negatively correlated.}. A hash family  $\Hc = \{h:\Omega \to \tilde{\Omega}\}$ is called uniform and pairwise independent if the following two criteria are met for a randomly and uniformly chosen $h$ from $\Hc$:
1) for every $x \in \Omega,$ $h(x)$ is uniformly distributed in $\tilde{\Omega}$ and 2) for any two $x,y \in \Omega$ and $u,v \in \tilde{\Omega}$, $\Pr(h(x) =u, h(y) =v) = \Pr(h(x) =u)\Pr(h(y) =v).$ 
By identifying $\Omega$ with $\ff_2^n$ (and $\tilde{\Omega}$ with $\ff_2^m$) and by using a family of hashes $\{x\mapsto h_{A,b}(x) = Ax + b: A \in \ff_2^{m \times n}, b \in \ff_2^m\}$ defined in \eqref{eq:hash}, \cite{ermon2013taming} show the family to be pairwise independent and thereby achieve their objective.

 The size of the hash family $\Hc$ determines how many random bits are required for the randomized algorithm to work. By defining the hash family by a random binary matrix, Ermon et al. reduce the number of random bits from potentially $m2^n$ to $mn +m = m(n+1)$ bits (see, p.~3 of \cite{ermon2013taming}).
Here, we show that it is possible to construct pairwise independent hash family $\{\ff_2^n\to \ff_2^m\}$ using only $O(n)$ random bits such that any hash function from the family still has the structure $h(x) = Ax+b$. 
While memory optimal pairwise independent hash functions are quite standard, we feel for completeness it would be good to show that they can be represented as the above matrix-vector product form.
All of the statements of this section can be easily extended to $q$-ary alphabets. 

\medskip

\noindent{\bf Construction 1:} 
Let $f(x) \in \ff_2[x]$ be an irreducible polynomial of degree $n$. We construct the finite field $\ff_{2^n}$ with the $\zeta$, root of $f(x)$ as a generator of $\ff_{2^n}^\ast$. Now, any $x \in \ff_2^n$ can be written as a power of $\zeta$ via a natural map $\phi: \ff_2^n \to \ff_{2^n}$. Indeed, for any element $\zeta^{k} \in \ff_{2^n}^\ast$ consider the polynomial $\zeta^k \mod f(\zeta)$ of degree $n-1$. The coefficients of this polynomial from an element of $\ff_2^n$. $\phi$ is just the inverse of this map. Also, assume that the all-zero vector is mapped to $0$ under $\phi$.  

Let $x\in \ff_2^n$ be the configuration to be hashed. Suppose the hash function is $h_{\nu,b}$, indexed by $\nu\in \ff_2^n$ and $b \in \ff_2^m$. The hash function is defined as follows:
Let  $\nu \in \ff_{2}^n$.
 Compute $z=\phi^{-1}(\phi(x) \cdot \phi(\nu) \mod f(\zeta)) \in \ff_2^n.$
 Let $y\in \ff_2^m$ be the first $m$ bits of  $z$.
  Finally, output $y+b$, where $b \in \ff_2^m$. 

\begin{proposition}\label{prop:aff}
The hash function $h_{\nu,b}$ can be written as an affine transform ($x \mapsto Ax+b$) over $\ff_2^n$. 
\end{proposition}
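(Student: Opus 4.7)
The plan is to verify that every step of the pipeline defining $h_{\nu,b}$ is $\ff_2$-linear in $x$, apart from the final addition of $b$. Since a composition of $\ff_2$-linear maps is $\ff_2$-linear and thus representable by a matrix, this immediately yields the claimed affine form.

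First I would observe that the map $\phi: \ff_2^n \to \ff_{2^n}$ is an isomorphism of $\ff_2$-vector spaces. The representation $\ff_{2^n} \cong \ff_2[\zeta]/(f(\zeta))$ has $\{1,\zeta,\zeta^2,\dots,\zeta^{n-1}\}$ as an $\ff_2$-basis, and $\phi$ is precisely the coordinate isomorphism sending $(x_0,\dots,x_{n-1})\in\ff_2^n$ to $\sum_{i=0}^{n-1}x_i\zeta^i$; its inverse $\phi^{-1}$ is therefore also $\ff_2$-linear. Next, for a fixed $\nu$, multiplication by $\phi(\nu)$ in $\ff_{2^n}$ is an $\ff_2$-linear endomorphism of $\ff_{2^n}$ because $\ff_2\subseteq\ff_{2^n}$ lies in the center and multiplication distributes over addition; the reduction $\bmod\, f(\zeta)$ is built into the field structure and adds no non-linearity. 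Finally, the projection onto the first $m$ coordinates is the prototypical $\ff_2$-linear surjection $\ff_2^n \to \ff_2^m$.

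Composing these three $\ff_2$-linear maps yields an $\ff_2$-linear map $T_\nu: \ff_2^n \to \ff_2^m$, which by the standard correspondence between linear maps and matrices is represented by a unique $A \in \ff_2^{m\times n}$ whose $j$-th column equals $T_\nu(e_j)$. Concretely, if $M_\nu \in \ff_2^{n\times n}$ is the matrix of multiplication by $\phi(\nu)$ in the basis $\{1,\zeta,\dots,\zeta^{n-1}\}$, then $A$ is simply the first $m$ rows of $M_\nu$. The full hash then reads $h_{\nu,b}(x) = Ax + b$, which is the claimed affine transform.

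There is no real obstacle here; the proposition is essentially bookkeeping. The only place to be careful is making the $\ff_2$-linearity of $\phi$ explicit, since the construction describes $\phi^{-1}$ in terms of discrete logarithms (powers of $\zeta$) rather than coordinates — one has to point out that as an additive map, $\phi$ is nothing but the polynomial-basis coordinate map, which is manifestly $\ff_2$-linear regardless of the multiplicative description used to define it.
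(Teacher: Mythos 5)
Your proof is correct and takes essentially the same route as the paper: both arguments reduce the claim to the $\ff_2$-linearity, for fixed $\nu$, of the map $x \mapsto \phi^{-1}(\phi(x)\cdot\phi(\nu) \bmod f(\zeta))$ followed by coordinate projection and the affine shift by $b$. The only difference is one of explicitness --- the paper exhibits the matrix concretely as $z = P\Gamma x$, with $\Gamma$ the $(2n-1)\times n$ convolution (Toeplitz) matrix built from $\nu$ and $P$ the $n\times(2n-1)$ reduction-mod-$f(\zeta)$ matrix, whereas you invoke the abstract correspondence between $\ff_2$-linear maps and matrices and correctly flag the one subtle point, namely that $\phi$ is additively just the polynomial-basis coordinate map despite being described via powers of $\zeta$.
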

\begin{proof} 
It is sufficient to show that $z$ can be obtained as a linear transform of $\nu$. Note that the product of $\phi(x)$ and $\phi(\nu)$ can be written as a convolution between $x$ and $\nu\equiv (\nu_1, \nu_2, \dots, \nu_n)$ (as we can view this as product between two polynomials). Let $\Gamma$ be the $(2n -1) \times n$ matrix,
$$ \Gamma=
\begin{bmatrix}
\nu_1 & 0 & 0 &  \dots & 0\\
\nu_2 & \nu_1 & 0 & \dots & 0\\
\nu_3& \nu_2 & \nu_1 & \dots & 0\\
\vdots & \vdots & \vdots &\vdots & \vdots\\
\nu_n & \nu_{n-1} & \nu_{n-2} & \dots & \nu_1\\
0 & \nu_n & \nu_{n-1} & \dots & \nu_2\\
\vdots & \vdots & \vdots &\vdots & \vdots\\
0 & 0 & 0 & \dots & \nu_n
\end{bmatrix}.
$$
The reduction modulo $f(\zeta)$ can also be written as a linear operation. Just consider the $n \times (2n-1)$ matrix 
$P$ whose $i$th column contains the coefficients of the polynomial $\zeta^{i-1} \mod f(\zeta), 1\le i \le 2n-1.$ Note that the first $n$ columns of the matrix is simply the identity matrix.
We can write,
$
z= P\Gamma x.
$ 
\end{proof}
Note that, to chose a random and uniform  hash function from $\{h_{\nu,b}, \nu \in \ff_2^n, b \in \ff_2^m\}$, one needs $m+n$ random bits.
It follows that the  hash family is  pairwise independent. 
\begin{proposition}\label{prop:pair}
The hash family $\{h_{\nu,b}, \nu \in \ff_2^n, b \in \ff_2^m\}$ is uniform and pairwise independent.
\end{proposition}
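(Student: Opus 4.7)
The plan is to exploit Proposition~\ref{prop:aff}, which lets us write $h_{\nu,b}(x) = A_\nu x + b$ for some matrix $A_\nu \in \ff_2^{m\times n}$ depending only on $\nu$. Uniformity is then immediate: for any fixed $x$, the vector $A_\nu x$ is some deterministic function of $\nu$, and the independent uniform shift $b \in \ff_2^m$ makes $h_{\nu,b}(x) = A_\nu x + b$ uniform on $\ff_2^m$.

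For pairwise independence, fix distinct $x_1, x_2 \in \ff_2^n$ and $u, v \in \ff_2^m$. I would start from the factorization
\[
  \Pr\bigl(h_{\nu,b}(x_1) = u,\; h_{\nu,b}(x_2) = v\bigr)
  = \Pr\bigl(A_\nu(x_1 - x_2) = u - v\bigr)\cdot 2^{-m},
\]
which holds because, conditional on any realization of $\nu$, the vector $b$ is still uniform and independent, and hence hits the unique value $u - A_\nu x_1$ with probability $2^{-m}$. The task therefore reduces to showing that $A_\nu(x_1 - x_2)$ is uniform on $\ff_2^m$ when $\nu$ is uniform on $\ff_2^n$.

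This is where the algebraic structure of Construction~1 enters. Unpacking $h_{\nu,b}$, multiplication by $A_\nu$ corresponds---via the $\ff_2$-linear isomorphism $\phi:\ff_2^n \to \ff_{2^n}$ that identifies a binary vector with a polynomial in $\zeta$ modulo $f(\zeta)$---to multiplication by $\phi(\nu)$ in $\ff_{2^n}$, followed by the truncation $T_m$ that keeps the first $m$ coordinates. Hence
\[
  A_\nu(x_1 - x_2) = T_m\bigl(\phi^{-1}\bigl(\phi(\nu)\cdot \phi(x_1 - x_2)\bigr)\bigr).
\]
Since $\phi$ is an $\ff_2$-linear bijection and $x_1 \ne x_2$, the element $\phi(x_1 - x_2) = \phi(x_1) - \phi(x_2)$ is a nonzero element of $\ff_{2^n}$, so multiplication by it is a bijection on $\ff_{2^n}$. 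As $\nu$ ranges uniformly over $\ff_2^n$, $\phi(\nu)\cdot\phi(x_1-x_2)$ is uniform on $\ff_{2^n}$, and applying $\phi^{-1}$ yields a uniform distribution on $\ff_2^n$; projecting onto any $m$ coordinates preserves uniformity, so $A_\nu(x_1-x_2)$ is uniform on $\ff_2^m$. Combined with the factorization above, this gives joint probability $2^{-2m}$, as required. The main point to get right is the clean separation of the randomness in $\nu$ from that in $b$, which is what allows the conditional probability over $b$ to be pulled out; once that is in place, the rest is a routine appeal to the fact that multiplication by a nonzero field element is a bijection and coordinate truncation preserves the uniform distribution.
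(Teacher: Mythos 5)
Your proof is correct and follows essentially the same route as the paper's: factor out the $2^{-m}$ coming from the uniform shift $b$, reduce to showing $A_\nu(x_1-x_2)$ is uniform, and conclude from the fact that multiplication by the nonzero element $\phi(x_1-x_2)$ is a bijection of $\ff_{2^n}$ followed by a coordinate projection. Your write-up is somewhat more explicit than the paper's (which states the key uniformity of $(\phi(x_1)-\phi(x_2))\cdot\phi(\nu)$ without spelling out the bijection-plus-truncation reasoning), but the argument is the same.
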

\begin{proof} 
Suppose $\nu,b$ are randomly and uniformly chosen. For any $x_1,x_2\in \ff_2^n$ and $y_1,y_2 \in \ff_2^m$, first of all
$$
\Pr(h_{\nu,b}(x_1) = y_1) = \frac{1}{2^m},
$$ 
since $b$ is uniform. Now,
\begin{align*}
&\Pr(h_{\nu,b}(x_1) = y_1, h_{\nu,b}(x_2) = y_2) \\
&= \frac{1}{2^m}\Pr(h_{\nu,b}(x_2) = y_2 | h_{\nu,b}(x_1) = y_1)\\
&= \frac{1}{2^m}\Pr(h_{\nu,b}(x_2) -h_{\nu,b}(x_1)= y_2-y_1 | h_{\nu,b}(x_1) = y_1)\\
&= \frac{1}{2^m}\Pr(h_{\nu,b}(x_2) -h_{\nu,b}(x_1)= y_2-y_1).
\end{align*}
Now, since $\Pr((\phi(x_1)-\phi(x_2)) \cdot \phi(\nu) \mod f(\zeta) = u) = \frac1{2^n}$ for any $u$, we must have $\Pr(h_{\nu,b}(x_2) -h_{\nu,b}(x_1)= y_2-y_1)= \frac1{2^m}$.
Therefore the claim is proved.
%
\end{proof}

Moreover the randomness used to construct this hash function is also optimal. It can be shown that,  
the size of a pairwise independent hash family $\{h:\{0,1\}^n \to \{0,1\}^m\}$ is at least $2^{m+n}-2^{n}+1$  (see,  \cite{stinson1996connections}).
This implies that $m+n$ random bits were essential for the construction.

\medskip

\noindent{\bf Construction 2: Toeplitz matrix.}
In \cite{ermon2013optimization}, a Toeplitz matrix was used as the hash function. In a Toeplitz matrix, each descending diagonal from left to right is fixed, i.e., if $A_{i,j}$ is the $(i,j)$th entry of a Toeplitz matrix, then $A_{i,j} = A_{i-1,j-1}$. So to specify an $m \times n$ Toeplitz matrix one needs to provide only $m+n-1$ entries (entries of the first row and first column).
Consider the  random $m \times n$ Toeplitz matrix $A_T$ where each of the entries of the first row and first column are chosen with equal probability from $\{0,1\}$, i.e., each entry in the first row and column is a Bernoulli($0.5$) random variable. The hash function $h_{A_T,b}: x\mapsto A_T x+ b$, is constructed by choosing a uniformly random $b\in \ff_2^m.$
\begin{proposition}\label{prop:top}
The hash family $\{h_{A_T, b}\}$  is uniform and pairwise independent \cite{ermon2013optimization}.
\end{proposition}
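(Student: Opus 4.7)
The plan is to verify the two conditions of pairwise independence separately. Uniformity is immediate: for any fixed $x \in \ff_2^n$, once $A_T$ is realized, the vector $A_T x + b$ is a fixed vector plus the uniform random $b$, hence uniform on $\ff_2^m$.

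For pairwise independence, I would first reduce everything to a single linear-algebra statement. For distinct $x, y \in \ff_2^n$ and any targets $u, v \in \ff_2^m$, independence of $A_T$ and $b$ together with conditioning on $A_T$ yields
\begin{align*}
\Pr(A_T x + b = u,\, A_T y + b = v)
&= \Pr(A_T(y-x) = v-u) \cdot \Pr(b = u - A_T x \mid A_T) \\
&= 2^{-m}\,\Pr(A_T z = w),
\end{align*}
where $z := y - x \ne 0$ and $w := v - u$. Pairwise independence therefore reduces to showing that \emph{for every nonzero $z \in \ff_2^n$, the vector $A_T z$ is uniform on $\ff_2^m$.}

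To establish this uniformity claim, view the $\ff_2$-linear map $T_z : \ff_2^{m+n-1} \to \ff_2^m$ sending the Toeplitz parameters $(a_{1-n}, a_{2-n}, \ldots, a_{m-1})$ (with $(A_T)_{i,j} = a_{i-j}$) to $A_T z$. Since the parameters are uniform and independent, it suffices to show $T_z$ is surjective. Let $j^*$ be the smallest index with $z_{j^*} = 1$. Restrict $T_z$ to the $m$ coordinates $(a_{1-j^*}, a_{2-j^*}, \ldots, a_{m-j^*})$, which are genuine Toeplitz parameters since their indices lie in $\{1-n,\dots,m-1\}$. A direct computation shows that the $(i, c)$-entry of the resulting $m \times m$ matrix $M$ is the coefficient of $a_{c - j^*}$ in $(A_T z)_i = \sum_{j=1}^n a_{i-j} z_j$, namely $z_{i-c+j^*}$ when $1 \le i-c+j^* \le n$ and $0$ otherwise. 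On the diagonal ($c = i$) this equals $z_{j^*} = 1$; strictly above the diagonal ($c > i$) the relevant index $i - c + j^* < j^*$ forces the entry to vanish, either because it is out of range or by minimality of $j^*$. Hence $M$ is lower triangular with $1$'s on the diagonal, so invertible, and $T_z$ is surjective.

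The main obstacle is precisely this surjectivity step; the index book-keeping between the $m+n-1$ Toeplitz parameters and the $m$ output coordinates is the only subtle part, and the idea is to let the leading nonzero coordinate $j^*$ of $z$ single out a clean triangular submatrix. Once surjectivity is in hand, combining it with the reduction above gives $\Pr(h(x)=u,\, h(y)=v) = 2^{-m} \cdot 2^{-m} = \Pr(h(x)=u)\,\Pr(h(y)=v)$, which together with uniformity completes the proof.
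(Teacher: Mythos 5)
Your proof is correct and takes essentially the same route as the paper: the same reduction to showing that $A_T z$ is uniform on $\ff_2^m$ for every nonzero $z$, and the same key observation that the leading nonzero coordinate $j^*$ of $z$ singles out, in each output row, a Toeplitz parameter that appears there for the first time. The paper packages this as a sequential conditioning argument (each row's inner product contains a fresh fair coin given all previous rows), and your lower-triangular submatrix $M$ with unit diagonal is just the explicit linear-algebra form of that same fact --- if anything, a slightly more airtight write-up.
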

\begin{proof} 
First of all, the uniformity of the family is immediate since $b$ is uniformly chosen. 
For any $x_1,x_2\in \ff_2^n$ and $y_1,y_2 \in \ff_2^m$,
$
\Pr(h_{A_T,b}(x_1) = y_1, h_{A_T,b}(x_2) = y_2) 
= \frac{1}{2^m}\Pr(h_{A_T,b}(x_2) = y_2 | h_{A_T,b}(x_1) = y_1)
= \frac{1}{2^m}\Pr(A_T(x_1-x_2)= y_1-y_2).
$
It remains to prove that $\Pr(A_Tx =y) = \frac{1}{2^m}$ for any fixed $x,y$.  Let the $k$th coordinate of $x$ is the first to be in the support of $x$. 
Now consider the inner product of the $j$th row of $A_T$ with $x$. This product will contain the entry $A_T(j,k)$, the  $(j,k)$th entry of $A_T$. Note that, this entry would not have appeared in any of the inner products of $i$th row of $A_T$ and $x$, for $i <j$. Therefore the probability that this inner product is any fixed value is exactly $\frac12$ given inner product of all previous rows with $x$. Therefore, $\Pr(A_Tx =y) = \frac{1}{2^m}$. 
\end{proof}

Note that,  the number of random bits required from this construction is $2m+n-1$. 
 Toeplitz matrix allow for much faster computation of the hash function (matrix-vector multiplication with Toeplitz matrix takes only $O(n \log n)$ time compared to $\Omega(mn)$ for unstructured matrices).

 We remark that {\em sparse Toeplitz Matrices} also can be used as our hash family, further reducing the randomness.
In particular, we could  construct a Toeplitz matrix with Bernoulli($p$) entries for $p <0.5$.  While the pairwise independence of the hash family is lost, it is still possible to analyze the \texttt{MB-WISH} algorithm with this family of hashes since they form a {\em strongly universal} family \cite{stinson1996connections}.  The number of random bits used in this hash family is $(m+n-1)h(p)+m$.  This construction allows us to have sparse rows in the matrix for small values of $p$, which can lead to further speed-up. 

Both the constructions of this section extend to $q$-ary alphabet straightforwardly.


\section{\texttt{MB-WISH} for computing permanent}\label{sec:perm}
For computing  the {\em permanent}, the domain of integration  is the symmetric group $S_n$. However $S_n$ can be embedded in  $\ff_q^n$ for a $q\ge n$. Therefore we can try to use \texttt{MB-WISH} algorithm and same set of hashes on elements of $S_n$ treating them as $q$-ary vectors, $q\ge n$. We need to be careful though since it is essential that the MAX-oracle returns a permutation  and not an arbitrary vector. The modified MAX-oracle for permanents therefore must have some additional constraints. However those being affine constraints, it turns out MAX-oracle is still implementable in  optimization softwares.

Recall the permanent of a matrix as defined in Eq.~\eqref{eq:perm}: ${\rm Perm}(D) \equiv \sum_{\sigma \in S_n} \prod_{i=1}^n D_{i,\sigma(i)}$. We will show that it is possible to  approximate the permanent with a modification of the \texttt{MB-WISH} algorithm and our idea of using multiple bins for optimization in the calls to MAX-oracle.
Also, recall from Section~\ref{sec:wish} that we set $\ff_q\equiv  \{\alpha_0, \alpha_1,\dots , \alpha_{q-1}\}$ where there exists a fixed ordering among the elements. We set $q \ge n$ and consider any $\sigma \in S_n$ as an $n$-length vector  over $\ff_q$ (that is by identifying $1, 2, \dots, n$ as $\alpha_0, \alpha_1, \dots, \alpha_{n-1}$ respectively). Then we define a modified hash family 
$\Hc_{m,n} = \{h_{A,b}:  A \in \ff_q^{m \times n}, b \in \ff_q^m \}$ with $h_{A,b}: S_n \to \ff_q^m: \sigma \mapsto A\sigma + b,$ the operations are over $\ff_q$.

However, when calling the MAX-oracle, we need to make sure that we are getting a permutation as the output. Hence the modified MAX-oracle for computing permanent will be:
\begin{align}
&\max_{\sigma \in \ff_q^n} w(\sigma)\nonumber\\ 
\text{ s.t., }
 A\sigma +b \prec \alpha_r\cdot \mathbf{1} ;&
 \sigma \prec \alpha_{n-1}\cdot \mathbf{1} ;
\sigma(i) \ne \sigma(j) \forall i \ne j,
\end{align}
where, $w(\sigma) =\prod_{i=1}^n D_{i,\sigma(i)}$.
These constraints ensures that the MAX-oracle returns a permutation over $n$ elements.
With this change we propose Algorithm~\ref{algo:permanent} to compute permanent of a matrix and 
call it \texttt{PERM-WISH}.  The full algorithm is provided as Algorithm \ref{algo:permanent}. 


\begin{algorithm}[H]                   
\caption{\texttt{PERM-WISH} for and matrix $D$;  $\Omega=S_n$; weight function $w(\sigma) =\prod_{i=1}^n D_{i,\sigma(i)}$}                               \label{algo:permanent}
\begin{algorithmic}                 
      \REQUIRE $\ell \rightarrow \lceil \frac{1}{\gamma}\ln \frac{2n}{\delta} \rceil$, $q>n$, $r=\lfloor{\frac{q-1}{2}}\rfloor, n'=\lceil n \log_{q/r} q\rceil$
      \STATE $M_0 \equiv \max_{\sigma \in S_n} w(\sigma)$ 
       \FOR{$i\in\{1,2,\dots,n'\}$}
         \FOR{$k \in \{1,\dots,\ell\}$}
           \STATE Sample hash functions $h_i \equiv h_{A^{i},b^{i}}$ uniformly at random from $\Hc_{i,n}$ as defined in \eqref{eq:fam}
           \STATE $w_{i}^{(k)}=\max_{\sigma\in \ff_q^n} w(\sigma) $ such that $ A^{i}\sigma +b^i \prec \alpha_r\cdot \mathbf{1} ; \sigma \prec \alpha_{n-1}\cdot \mathbf{1} ;\sigma(k) \ne \sigma(l) \forall k \ne l$.
        \ENDFOR
         \STATE $M_{i}={\rm Median}(w_{i}^{(1)},w_{i}^{(2)},\dots,w_{i}^{(\ell)})$
       \ENDFOR
       \STATE Return $M_{0}+(\frac{q}{r}-1)\sum_{i=0}^{n'-1}M_{i+1}\big(\frac{q}{r}\big)^{i} $
\end{algorithmic}
\end{algorithm}

The main result of this section is the following.
\begin{theorem}\label{thm:perm}
Let $D$ be any $n\times n$ matrix. Let $q>n$ be a power of prime and  $r=\lfloor{\frac{q-1}{2}}\rfloor$.  For any  $\delta> 0$, 
Algorithm \ref{algo:permanent} makes $\Theta(n^2 {\rm poly}(\log \frac{n}{\delta}))$ calls to the MAX-oracle and, with probability at least $1-\delta$ outputs a $(\frac{q}{r})^{2}=(4+O(1/n))$-approximation of $\mathrm{Perm}(D)$.
\end{theorem}

The proof of Theorem~\ref{thm:perm} follows the same trajectory as in Theorem~\ref{thm:mb}. The constraints in MAX-oracle ensures that a permutation is always returned. So in the proof of Theorem~\ref{thm:mb}, the $w_{i}^{(k)}$s can be though of as permutations instead in this setting. 
It should be noted that, we must take $q>n$ for \texttt{PERM-WISH} to work. That is the reason we get a $(4+O(1/n))$-approximation for the permanent.

It also has to be noted that, since $q$ is large, the straightforward extension of \texttt{WISH} algorithm would have provided only a $q^2 = n^2$-approximation of the permanent. Therefore the idea of using optimizations with multiple bins  are crucial here as it lead to a close to $4$-approximation.

\remove{\paragraph{Simulation.} 
We use the \texttt{PERM-WISH} algorithm to find the permanent of randomly generated matrices of size $n \times n, n=5,6,7,8,9,10$ and  compute the ratio of the value achieved by our method ($\hat{Z}$) and actual permanent ($Z$). For the purpose of constrained optimization, we use the python modules described above. For the experiment we take $q=11$ and $r=5$. We find $\hat{Z}/Z = 2.137, 1.103, 1.245, 2.407, 0.730, 1.527$ for $n =5,6,7,8,9,10$ respectively.}


\end{document}